\documentclass[12pt]{amsart}
\usepackage{inconsistent_beliefs_BE}

\hyphenation{Bayes-ian bayes-ian}
\title{}

\begin{document}

\vspace*{3ex minus 1ex}
\begin{center}
\Large \textsc{Existence of Bayesian Equilibria in Incomplete Information Games without Common Priors}
\bigskip
\end{center}

\date{%
\today%
}

\vspace*{3ex minus 1ex}
\begin{center}
Denis Kojevnikov\,\orcidlink{0009-0008-2514-5835} and Kyungchul Song\\
\textit{Tilburg University and University of British Columbia}
\end{center}

\thanks{
   We are grateful to Li Hao, Vitor Farinha Luz, Humberto Moreira, Paulo Klinger Monteiro, and participants at UBC Theory Lunch Seminar. All errors are ours. Song acknowledges that this research was supported by Social Sciences and Humanities Research Council of Canada. Corresponding address: Denis Kojevnikov, Department of Econometrics and Operations Research, Tilburg University, Netherlands. Email address: \href{mailto:d.kojevnikov@tilburguniversity.edu}{d.kojevnikov@tilburguniversity.edu}.
}

\begin{abstract}
   This paper focuses on finite-player incomplete information games where players may hold mutually inconsistent beliefs without a common prior. We introduce absolute continuity of beliefs, extending the classical notion of absolute continuity of information in \cite{Milgrom/Weber:85:MOR}, and prove that Bayesian equilibria exist in a broad class of games, including those with discontinuous payoffs. Examples illustrate the scope and implications of our findings.

   \medskip

   {\noindent \textsc{Keywords:} Absolute Continuity of Beliefs; Bayesian Equilibrium; Behavioral Strategy; Existence of Equilibrium}

   \medskip

   {\noindent \textsc{JEL Classification: C62; C72; D82}}
\end{abstract}

\maketitle

\vspace*{5ex minus 1ex}

\section{Introduction}

Studies of incomplete information games often assume that players hold beliefs derived from a common prior. The common prior assumption has proven instrumental in generating several foundational results in economic theory (e.g., \citeay{Aumann:76:AS}, \citeay{Milgrom/Stokey:82:JET}, \citeay{Sebenius/Geanakoplos:83:JASA}, \citeay{Aumann:87:Eca}, \citeay{Aumann/Brandenburger:95:Eca}). It also provides mathematical tractability in many applied settings (\citeay{Myerson:04:MS}). However, critics argue that it is neither plausible nor meaningful as a representation of the epistemic aspects of strategic environments (\citeay{Dekel/Gul:97:AEE}, \citeay{Gul:98:Eca}). (See also \citeay{Morris:95:EP} and \citeay{Bonanno/Nehring:99:IJGT} for a discussion of the common prior assumption.)

Despite the controversies, the common prior assumption remains a fundamental tool in establishing equilibrium existence for incomplete information games. The seminal work of \cite{Milgrom/Weber:85:MOR} showed the existence of a distributional strategy equilibrium in general incomplete information games with payoffs continuous in actions, and \cite{Balder:88:MOR} extended the result by removing the topological requirements on the type sets. More recently, by adapting the framework of \cite{Reny:99:Eca} and conditions in \cite{Monteiro/Page:07:JET}, \cite{Allison/Lepore:14:JET}, and \cite{Prokopovych/Yannelis:14:JME} to an incomplete information setting, \cite{He/Yannelis:16:JET} and \cite{Carbonell-Nicolau/McLean:18:MOR} established the existence of a Bayesian Nash equilibrium in general incomplete information games with discontinuous payoffs. The literature has also explored the existence of a pure strategy equilibrium or an equilibrium with a monotonicity property (\citeay{Athey:01:Eca}, \citeay{McAdams:03:Eca}, and \citeay{Reny:11:Eca}, to name but a few). Except for some restrictive settings (see \citeay{VanZandt:10:JET} for supermodular games and Theorem 10.42 of \citeay{Maschler/Solan/Zamir:20:GameTheory} for games with a finite set of states of the world), the literature on equilibrium existence in incomplete information games predominantly adopts the common prior assumption.

In this paper, we consider a general finite-player incomplete information game where the players hold potentially heterogeneous beliefs, not necessarily derived from a common prior.\footnote{
   In econometrics, \cite{Aradillas-Lopez/Tamer:08:JBES} provided identification analysis for incomplete information games with heterogeneous beliefs, focusing on rationalizability as a solution concept. \cite{Kojevnikov/Song:23:JOE} developed asymptotic inference for large incomplete information games with heterogeneous beliefs without a common prior when the predictions are based on Bayesian equilibria.
}
It has been noted that the existence of Bayesian equilibria may not be guaranteed if the type space is too large. For instance, using an example from \cite{Hellman:14:JET}, \cite{Friedenberg/Meier:17:ET} constructed a finite-action incomplete information game that does not have a Bayesian equilibrium when the type space is the universal type space of \cite{Mertens/Zamir:85:IJGT}.

Our main innovation is the notion of \emph{absolute continuity of beliefs}, which we show is sufficient for the existence of Bayesian equilibria in a wide class of games without a common prior. Absolute continuity of beliefs requires the existence of a product probability measure that dominates each player's belief about the payoff state and other players' types. When the type spaces are finite sets, this condition is satisfied. However, it excludes settings in which a player's beliefs at two distinct type values from a continuum are mutually singular. Such a case arises, for example, when the types are drawn from a continuum, and it is common knowledge that their values are identical across the players. As we show in the paper, absolute continuity of beliefs can be viewed as an extension of absolute continuity of information, introduced by \cite{Milgrom/Weber:85:MOR}, to a setting without a common prior; under the common prior assumption, it reduces to absolute continuity of information.

We study the existence of Bayesian equilibria through the existence of Nash equilibria in a surrogate game with complete information. We first characterize the set of Bayesian equilibria as the intersection of the sets of Nash equilibria across a class of surrogate games indexed by product probability measures that dominate the belief profile of the players. Our main result shows that a Bayesian equilibrium exists if and only if a Nash equilibrium of the surrogate game exists for some dominating product probability measure. The existence of a Nash equilibrium in the surrogate game can be studied in a standard manner because we can reformulate the game as a Bayesian game where the dominating product probability measure is simply the common prior of this game, thereby automatically satisfying absolute continuity of information. Thus, our result provides a general approach to ensuring the existence of Bayesian equilibria in incomplete information games without a common prior.

Our general framework offers broad applicability across various domains, notably including incomplete information games with action-discontinuous payoff functions. To illustrate the framework's usefulness, we present an example of incomplete information Cournot games adapted from \cite{Carbonell-Nicolau/McLean:18:MOR}. Applying our results, we show that when the belief map profile satisfies absolute continuity of beliefs and the payoff function satisfies mild conditions, these games have Bayesian equilibria.

This paper is organized as follows. Section \ref{sec2} introduces a general incomplete information game and the notion of absolute continuity of beliefs. Section \ref{sec3} establishes the characterization and existence results for Bayesian equilibria and presents applications. The appendices contain relegated proofs and auxiliary results.

\subsubsection*{Preliminaries}
Throughout the paper, $\NN$ and $\R$ denote the sets of natural and real numbers, respectively. For a topological space $X$, $\B(X)$ denotes the Borel $\sigma$-algebra on $X$. A \emph{Polish space} is a separable, completely metrizable topological space, and a \emph{Borel space} is a metrizable space that is homeomorphic to a Borel subset of a Polish space \citep[see, e.g.,][Definition 15.17]{Aliprantis/Border:06:InfiniteDimensionalAnalysis}.\footnote{
   Note that a Borel space is a Borel subset of its completion under a compatible metric \citep[see, e.g.,][Proposition 3.3.7 and Remark 3.3.8]{Srivastava:CourseOnBorelSets}.
}

Let $(X,\X)$ and $(Y,\Y)$ be measurable spaces. The \emph{universal completion} of $\X$ is the $\sigma$-algebra $\bigcap_{\mu} \X^\mu$, where $\mu$ ranges over the probability measures on $(X,\X)$ and $\X^\mu$ denotes the completion of $\X$ with respect to $\mu$. A $\sigma$-algebra is \emph{universally complete} if it equals its universal completion.

We say that a map $f: X \to \R$ is $\X$-measurable if it is $(\X,\B(\R))$-measurable. For measures $\mu$ and $\mu'$ on $(X,\X)$, we write $\mu \ll \mu'$ if $\mu$ is absolutely continuous with respect to $\mu'$, and $\mu \sim \mu'$ if they are \emph{equivalent} (i.e., mutually absolutely continuous). For measures $\mu$ and $\mu'$ on $(X,\X)$ and $(Y,\Y)$, respectively, we let $\mu \otimes \mu'$ denote the product measure on $(X \times Y,\X \otimes \Y)$, where $\X\otimes \Y$ denotes the product $\sigma$-algebra.

A map $\rho: X \times \Y \rightarrow [0,1]$ is a \emph{probability kernel} if for each $B \in \Y$, the map $x \mapsto \rho(x,B)$ is $\X$-measurable and for each $x \in X$, the map $B \mapsto \rho(x,B)$ is a probability measure on $(Y,\Y)$ \citep[see, e.g.,][pp.~55--56]{Kallenberg:21:Foundations}. Given a probability measure $\mu$ on $(X,\X)$, we define $\mu \otimes \rho$ to be the probability measure on $(X \times Y,\X \otimes \Y)$ such that for each $B_1\in \X$ and $B_2\in \Y$,
\begin{align*}
   (\mu \otimes \rho)(B_1\times B_2) = \int_{B_1} \rho(x,B_2) \mu(dx).
\end{align*}
Conversely, given a probability measure $\lambda$ on $(X \times Y, \X \otimes \Y)$ with $X$-marginal $\mu$, a probability kernel $\rho: X \times \Y \to [0,1]$ is a \emph{disintegration} of $\lambda$ with respect to $\mu$ if $\lambda = \mu \otimes \rho$ \citep[see, e.g.,][p.~60]{Kallenberg:21:Foundations}. For probability kernels $\rho,\rho': X \times \Y \rightarrow [0,1]$ and a measure $\mu$ on $(X,\X)$, we say that $\rho$ is a \emph{$\mu$-version} of $\rho'$ if $\rho(x,\csdot) = \rho'(x,\csdot)$ for $\mu$-almost all $x\in X$.

\section{Incomplete Information Game}
\label{sec2}

\subsection{Types and Beliefs}

Suppose that there are $n\ge 2$ players indexed by $i=1,\ldots,n$. Let $T_0$ denote the set of payoff states that are not observed by any player, and let $T_i$, $i=1,\ldots,n$, denote the type set of player $i$. The type $t_i \in T_i$ represents player $i$'s private information, which may include payoff-relevant characteristics as well as belief-relevant information. Each player $i$ chooses an action from a non-empty topological space $A_i$. We assume that each $T_j$, $j=0,\ldots,n$, is a non-empty set equipped with a $\sigma$-algebra $\T_j$.


\begin{assumption}[Types and Actions]
   \label{assump:basic}
   For each $i=1,\ldots,n$, $A_i$ is a compact metrizable space, and either
   (a)~$T_i$ is a Borel space with $\T_i=\B(T_i)$, or
   (b)~$\T_i$ is universally complete.\footnote{
      This assumption provides part of the sufficient conditions for the existence of measurable maximizers required for our existence result. Note that the applicable case, (a) or (b), may differ across the players.
   }
\end{assumption}

We define $A \eqdef \prod_{i=1}^n A_i$ and $T \eqdef \prod_{j=0}^n T_j$ with the corresponding $\sigma$-algebra $\T \eqdef \bigotimes_{j=0}^n \T_j$. A \emph{behavioral strategy} for player $i=1,\ldots,n$ is a probability kernel $\gamma_i: T_i \times \B(A_i) \rightarrow [0,1]$. The set of all behavioral strategies $\gamma_i$ is denoted by $\RC_i$, and the set of all such profiles by $\RC \eqdef \prod_{i=1}^n \RC_i$. We represent the belief of player $i$ as a probability kernel $\eta_i: T_i \times \T_{-i} \rightarrow [0,1]$, where $\T_{-i} \eqdef \bigotimes_{j=0, j \ne i}^n \T_j$. For each $t_i \in T_i$, the probability $\eta_i(t_i,\csdot)$ expresses player $i$'s belief about the payoff state and the other players' types $t_{-i} \in T_{-i}\eqdef \prod_{j=0,j \ne i}^n T_j$, when the player's type is $t_i$. We call $\eta_i$ the \emph{belief map} of player $i$ following \cite{Friedenberg/Meier:17:ET} and refer to $\eta \eqdef (\eta_1,\ldots,\eta_n)$ as the corresponding \emph{belief map profile}. For $j=0,\ldots,n$, let $\pi_j: T \rightarrow T_j$ denote the coordinate projection given by $\pi_j(t)=t_j$. The following definitions introduce the notions of consistency and absolute continuity of beliefs.

\begin{definition}[Consistency of Beliefs]
   \label{def:consistency}
   Let $p$ be a probability measure on $(T,\T)$. A belief map profile $\eta$ is \emph{consistent under $p$} if for each $i=1,\ldots,n$, $\eta_i$ is, up to coordinate permutation, a disintegration of $p$ with respect to the marginal $p \circ \pi_i^{-1}$ on $(T_i, \T_i)$. We say that $\eta$ is \emph{consistent} if it is consistent under some probability measure on $(T,\T)$.
\end{definition}

Intuitively, consistency of beliefs means that there exists a probability measure $p$ on $(T,\T)$, which can be regarded as a common prior, having the belief maps as posteriors, i.e., each $\eta_i$ is a regular conditional $p$-probability on $(T_{-i},\T_{-i})$ given player $i$'s type.

We let $\Sigma(T)$ denote the set of product probability measures on $(T,\T)$ of the form $\nu = \bigotimes_{j=0}^n \nu_j$, where $\nu_j$ is a probability measure on $(T_j,\T_j)$. Given $\nu \in \Sigma(T)$, we let $\nu_{-i}\eqdef\bigotimes_{j=0,j \ne i}^n \nu_j$, and, for $\gamma,\gamma'\in \RC$ and belief map profiles $\eta$ and $\eta'$, write $\gamma\sim_{\nu}\gamma'$ (resp.\ $\eta\sim_{\nu}\eta'$) if $\gamma_i$ is a $\nu_i$-version of $\gamma'_i$ (resp.\ $\eta_i$ is a $\nu_i$-version of $\eta'_i$) for each $i=1,\ldots, n$.

\begin{definition}[Absolute Continuity of Beliefs]
   Let $\nu\in \Sigma(T)$. A belief map profile $\eta$ is \emph{absolutely continuous with respect to} (or \emph{dominated by}) $\nu$ if for each $i=1,\ldots,n$ and $t_i \in T_i$,
   \[
      \eta_i(t_i,\csdot) \ll \nu_{-i}.
   \]
   In this case, we write $\eta \ll \nu$ and say that $\eta$ satisfies \emph{absolute continuity of beliefs}.
\end{definition}

Absolute continuity of beliefs requires the existence of a product probability measure $\nu$ such that each player $i$'s belief, regardless of the player's type, assigns zero probability to $\nu_{-i}$-null sets. The product structure of $\nu_{-i}$ implies that each marginal $\nu_j$ determines sets in coordinate $j$ to which all players whose beliefs concern coordinate $j$ assign probability zero, regardless of their own type. This is a new concept, distinct from the notion of absolute continuity of information introduced by \cite{Milgrom/Weber:85:MOR}.

\begin{definition}[Absolute Continuity of Information]
   A probability measure $p$ on $(T,\T)$ satisfies \emph{absolute continuity of information} if it is dominated by the product of its one-dimensional marginals, i.e., $p\ll \bigotimes_{j=0}^n p\circ \pi_{j}^{-1}$.
\end{definition}

Absolute continuity of information is widely used in the literature on the existence of Bayesian Nash equilibria (e.g., \citeay{Balder:88:MOR}, \citeay{Jackson/Swinkels:05:Eca}, and \citeay{He/Yannelis:16:JET}). By Lemma \ref{lemm:aci_equivalence} in the appendix, this condition on a common prior $p$ is equivalent to $p\ll \nu$ for some $\nu\in \Sigma(T)$. Consequently, absolute continuity of beliefs can be viewed as an extension of this notion to a setting without a common prior.

\begin{proposition}
   \label{prop:abs_cont}
   Suppose that a belief map profile $\eta$ is consistent under a probability measure $p$ on $(T,\T)$, and let $q\eqdef \bigotimes_{j=0}^n p\circ \pi_j^{-1}$.
   \begin{enumerate}[label=(\roman*)]
      \item If $\eta \ll \nu$ for some $\nu \in \Sigma(T)$, then $p \ll q$.

      \item Suppose further that for each $j=0,\ldots, n$, either (a) $T_j$ is a Borel space with $\T_j=\B(T_j)$, or (b) $\T_j$ is the universal completion of a countably generated $\sigma$-algebra on $T_j$. If $p \ll q$, then there exists a belief map profile $\widetilde{\eta}$ such that $\widetilde{\eta} \ll q$ and $\widetilde{\eta}\sim_{q}\eta$.
   \end{enumerate}
\end{proposition}

The proof of Proposition \ref{prop:abs_cont} is relegated to Appendix \ref{app1}. The key observation for part (i) is that if no player assigns positive probability to sets that are null under the dominating product measure, and beliefs are consistent under a common prior, then the common prior itself cannot assign positive probability to such sets. For part (ii), we construct the modified belief maps from the Radon--Nikodym derivative of the common prior with respect to the dominating product measure. For each player, we obtain versions of conditional distributions that satisfy the absolute continuity condition. Consistency ensures these conditional distributions agree with the original belief maps almost everywhere under the marginals of the common prior.

Proposition \ref{prop:abs_cont} implies that, for a belief map profile $\eta$ consistent under a common prior $p$, absolute continuity of beliefs with respect to some $\nu \in \Sigma(T)$ implies absolute continuity of information for $p$. Conversely, if the $\sigma$-algebras $\T_j$, $j=0,\ldots,n$, additionally satisfy the conditions of part (ii), then absolute continuity of information for $p$ ensures that $\eta$ admits a $q$-version $\widetilde{\eta}$ satisfying absolute continuity of beliefs with respect to $q$.

There are many examples of inconsistent belief map profiles that satisfy absolute continuity of beliefs. For instance, when $T$ is finite, every belief map profile is dominated by the uniform distribution on $T$. Examples of inconsistent beliefs with finite type sets are found in \cite{Maschler/Solan/Zamir:20:GameTheory}. The following example with uncountable type sets illustrates that absolute continuity of beliefs is substantially weaker than consistency of beliefs.

\begin{example}[Inconsistent, Absolutely Continuous Beliefs]
   \label{example:inconsistent_beliefs}
   Consider a game with two players, $1$ and $2$, whose type spaces are given by $T_i = \R$, $i=1,2$. Let $T_0=\R$. For player $i$ and each $t_i \in T_i$, the belief map is given by the normal distribution
   \[
      \N\left(
      \begin{bmatrix}
            0 \\
            t_i
         \end{bmatrix},
      \begin{bmatrix}
            1 & 1 \\
            1 & \sigma_i^2
         \end{bmatrix}
      \right),
   \]
   where $\sigma_i^2>1$. Then, consistency of beliefs requires $\sigma_1^2 = \sigma_2^2$. To see this, we denote the Gaussian densities of the belief map of player $i$ by $f_i(\csdot\mid t_i)$. Consistency of beliefs means that there exist densities $\phi_1$ and $\phi_2$ on $\R$, which can be chosen smooth and strictly positive, such that for each $t \in \R^3$,
   \begin{align*}
      f_1(t_{-1} \mid t_1) \phi_1(t_1) = f_2(t_{-2} \mid t_2) \phi_2(t_2),
   \end{align*}
   or, after taking logs,
   \begin{align}
      \label{eq:log_densities}
      \log \phi_1(t_1) - \log \phi_2(t_2) = \log f_2(t_{-2} \mid t_2) - \log f_1(t_{-1} \mid t_1).
   \end{align}
   The right-hand side of \eqref{eq:log_densities} is additively separable in $t_1$ and $t_2$, which implies that
   \begin{align*}
      0 = \frac{\partial^2 (\log f_2(t_{-2} \mid t_2) - \log f_1(t_{-1} \mid t_1))}{\partial t_1 \partial t_2} = \frac{1}{\sigma_2^2-1} - \frac{1}{\sigma_1^2-1}.
   \end{align*}
   Thus, consistency of beliefs requires that $\sigma_1^2=\sigma_2^2$. On the other hand, both beliefs are absolutely continuous with respect to the standard normal distribution on $\R^2$. While the beliefs are inconsistent whenever $\sigma_1^2 \ne \sigma_2^2$, the belief map profile is absolutely continuous with respect to the standard normal distribution on $\R^3$.
\end{example}

Absolute continuity of beliefs, while substantially weaker than absolute continuity of information in \cite{Milgrom/Weber:85:MOR}, is far from innocuous. Absolute continuity of information excludes a setting where at least two players observe the same signal drawn from a continuum under a common prior. Similarly, absolute continuity of beliefs excludes such a setting as shown in the following example inspired by an example in \cite{Cotter:91:JET}. (See \citeay{Stinchcombe:10:JET} for an extensive study of the information structures beyond absolute continuity of information.)

\begin{example}[Failure of Absolute Continuity of Beliefs]
   \label{example:failure_acb}
   Suppose that $T_0 = [0,1]$ and $T_i = W \times T'_i$, $i=1,\ldots,n$, where $W = [0,1]$ and $T'_i=\R$. Player $i$'s type takes the form $t_i = (w,t_i')$, where $w$ is commonly observed by all the players. In this case, for each $j \ne i$, $j \ne 0$, the marginal of $\eta_i(t_i, \csdot)$ on the $W$-component of $T_j$ is the Dirac measure at $w$, $\delta_{w}$. As $w$ ranges over $[0,1]$, these Dirac marginals are mutually singular, so absolute continuity of beliefs fails. Indeed, a dominating $\nu \in \Sigma(T)$ would require the marginal of $\nu_{-i}$ on the $W$-component of $T_j$ with $j\ne i$ and $j\ne 0$ to assign positive mass to every $w \in [0,1]$, which is impossible for any probability measure on this set.
\end{example}

\subsection{Payoffs and Bayesian Equilibria}

The payoff function of player $i$ is given by a $(\B(A) \otimes \T)$-measurable function $u_i: A \times T \rightarrow \R$. To ensure well-behaved expected payoffs, we introduce the following assumption.
\begin{assumption}
   \label{assump:payoff-1}
   For each $i=1,\ldots,n$, there exists a $\T_i$-measurable map $\overline{u}_i: T_i \rightarrow [0,\infty)$ such that for all $a \in A$ and $t \in T$, $|u_i(a,t)| \le \overline{u}_i(t_i)$.
\end{assumption}
This assumption allows for unbounded payoffs while maintaining tractable integrability conditions through the type-dependent bound $\overline{u}_i$. For each $\gamma\in\RC$ and $t \in T$, we define
\begin{align}
   \label{eq:t_expexted_payoff}
   V_i(\gamma;t) \eqdef \int_{A_1} \cdots\int_{A_n} u_i(a,t) \gamma_1(t_1,da_1) \cdots \gamma_n(t_n,da_n),
\end{align}
where $\gamma = (\gamma_1,\ldots,\gamma_n)$ and $a = (a_1,\ldots,a_n)$.\footnote{
    For fixed $t\in T$, Assumption \ref{assump:payoff-1} ensures that the multiple integral in \eqref{eq:t_expexted_payoff} is finite and equals the integral with respect to the product measure $\bigotimes_{i=1}^n \gamma_i(t_i,\csdot)$ by Fubini's theorem.
}
The expected payoff of player $i$ from a profile of behavioral strategies $\gamma $ is given by
\begin{align*}
   U_i(\gamma; t_i) \eqdef \int V_i(\gamma;t) \eta_i(t_i,d t_{-i}), \quad t_i \in T_i.
\end{align*}
We write $U_i(\gamma_i,\gamma_{-i};t_i)$ and $u_i(a_i,a_{-i},t)$ with $\gamma_{-i} \eqdef (\gamma_j)_{j \ne i}$ and $a_{-i} \eqdef (a_j)_{j \ne i}$, abusing notation.\footnote{
   The literature often refers to the expected payoff $U_i$ as ``\emph{interim} expected payoff'' as opposed to ``\emph{ex ante} expected payoff''. We avoid using the former term because we take the belief map profile as a primitive, and there is no notion of an \emph{ex ante} stage in our setting.
}
This completes the description of the incomplete information game $G$. We define a Bayesian equilibrium as follows (\citeay{Simon:03:IJM}, \citeay{VanZandt:10:JET}, and \citeay{Maschler/Solan/Zamir:20:GameTheory}).

\begin{definition}
   A profile of behavioral strategies $\gamma^*\in\RC$ is a \emph{Bayesian equilibrium} of the incomplete information game $G$ if for each $i=1,\ldots,n$, $t_i \in T_i$, and $\gamma_i \in \RC_i$,
   \begin{align}
      \label{eq:BE}
      U_i(\gamma_i^*,\gamma_{-i}^*; t_i) \geq U_i(\gamma_i,\gamma_{-i}^*; t_i).
   \end{align}
\end{definition}

The solution concept of a Bayesian equilibrium here is stronger than that often used in settings with a common prior. In the latter case, the condition \eqref{eq:BE} is required to hold only for almost every $t_i$ (e.g., \citeay{Kim/Yannelis:97:JET}). However, our setting does not assume a common prior that would define exceptional null sets. Hence, a Bayesian equilibrium in our setting requires the condition \eqref{eq:BE} to hold \emph{for every} $t_i \in T_i$. As a result, a typical argument that claims equivalence between distributional strategy equilibria and behavioral strategy equilibria breaks down (see \citeay{VanZandt:10:JET} for a related discussion). This also makes it non-trivial to prove the existence of Bayesian equilibria with a general type space. In our setting without a common prior, one profile of behavioral strategies being a Bayesian equilibrium does not make its versions Bayesian equilibria.

As mentioned previously, the literature on equilibrium existence in incomplete information games often invokes consistency of beliefs under a common prior. However, we can easily find examples where Bayesian equilibria exist, even though the beliefs are inconsistent. We provide an example with continuum type sets.

\begin{example}[Public Good Provision]
   Consider a setting with two players $i=1,2$. Each player $i$ has a type space $T_i = [t_L,t_U]$, with $0 = t_L < 1 < t_U < \infty$, and chooses an action from $\{0,1\}$. Player $i$ has a payoff function given by
   \begin{align*}
      u_i(a,t) = \max\{a_1,a_2\} - a_i t_i.
   \end{align*}
   We assume that $T_0$ is a singleton. The belief map $\eta'_i(t_i,\csdot)\eqdef\eta_i(t_i,T_0\times \csdot)$ on $[t_L,t_U]$ is given by a truncated normal distribution as follows:
   \begin{align*}
      \eta'_i(t_i, (t_L, t']) = \frac{F(t';t_i,\sigma_i) - F(t_L;t_i,\sigma_i)}{F(t_U;t_i,\sigma_i) - F(t_L;t_i,\sigma_i)}, \quad t' \in [t_L,t_U],
   \end{align*}
   where $F(t';t_i,\sigma_i) = \Phi((t' - t_i)/\sigma_i)$, with $\Phi$ denoting the CDF of the standard normal distribution, and $\sigma_i^2>1/4$. These belief maps cannot be consistent if $\sigma_1^2 \ne \sigma_2^2$ by the same arguments as in Example \ref{example:inconsistent_beliefs}. 
   
   We consider pure strategies of the form
   \begin{align*}
      \gamma_i^*(t_i,\csdot) = \delta_{1\{t_i \le t_i^*\}}(\csdot), \quad t_i^* \in [t_L,t_U],
   \end{align*}
   where $\delta_x$ is the Dirac measure at $x \in \R$. Let $p_i(t_i) \eqdef \gamma_i(t_i,\{1\})$. Then, the expected payoff of player $i$ when the other player plays $\gamma_{-i}^*$ with the cutoff $t_{-i}^{*}\in [t_L,t_U]$ is given by
   \begin{align*}
      U_i(\gamma_i,\gamma_{-i}^*;t_i) &= \int_{t_L}^{t_{-i}^*} (1 - p_i(t_i) \cdot t_i) \eta'_i(t_i,dt'_{-i}) + \int_{t_{-i}^*}^{t_U} p_i(t_i)(1 - t_i) \eta'_i(t_i,dt'_{-i}) \\
      &=1 - (1-p_i(t_i))\eta'_i(t_i,(t_{-i}^*, t_U])-p_i(t_i)\cdot t_i.
   \end{align*}
   For each $t_i\in [t_L,t_U]$, we have
   \[
      \frac{\partial U_i(\gamma_i,\gamma_{-i}^*;t_i)}{\partial p_i(t_i)}=\eta'_i(t_i,(t_{-i}^*, t_U]) - t_i.
   \]
   Therefore, player $i$ contributes with probability $1$ when $\eta'_i(t_i,(t_{-i}^*, t_U]) \ge  t_i$, and does not contribute otherwise (selecting contribution at the indifference point by convention). The bound $1/4$ on $\sigma_i^2$ guarantees that $0\le \partial \eta_i'(t_i, (\csdot,t_U])/\partial t_i <1$, uniformly on $[t_L,t_U]^2$ (see Lemma \ref{lemm:trunc_normal_drv} in the appendix). Hence, $t_i\mapsto \eta'_i(t_i,(t_{-i}^*, t_U]) - t_i$ is strictly decreasing, so the pointwise best response is a cutoff strategy. Let $\Gamma: [t_L,t_U]^2 \to [t_L,t_U]^2$ be defined by
   \[
      \Gamma(t_1,t_2)=\begin{bmatrix}
         \eta'_1(t_1,(t_2,t_U]) \\
         \eta'_2(t_2,(t_1,t_U])
      \end{bmatrix}.
   \]
   Since the map $t\mapsto\Gamma(t)$ is continuous and $[t_L,t_U]^2$ is convex and compact, Brouwer's fixed point theorem ensures the existence of a fixed point $t^* = (t_1^*,t_2^*)$ such that $t^* = \Gamma(t^*)$. The corresponding cutoff strategies form a Bayesian equilibrium.
\end{example}

\section{Existence of Bayesian Equilibria}
\label{sec3}

Our main focus is on the existence of Bayesian equilibria in the game $G$. A standard proof of the existence of equilibria invokes a fixed point theorem. This requires proving the continuity of the correspondence that maps a profile of behavioral strategies to the best responses. The continuity of the correspondence is usually guaranteed by the narrow topology on the space of behavioral strategies when the correspondence is in an ex ante form (\citeay{Balder:88:MOR}). Since we do not assume a common prior, such an ex ante form is not available.

We take a different approach. First, we form a surrogate complete information game from the original game, as explained in \cite{Reny:20:ARE}. The existence of a Nash equilibrium in this game can be established using standard results in the literature. Then, we show that each Nash equilibrium in the surrogate game has a version that is a Bayesian equilibrium.

Fix the payoff profile $u \eqdef (u_i)_{i=1}^n$, the associated dominating maps $(\overline{u}_i)_{i=1}^n$ as defined in Assumption \ref{assump:payoff-1}, and the belief map profile $\eta$ of the game $G$. Define $\M$ as the subset of the product probability measures in $\Sigma(T)$ that dominate the belief map profile $\eta$ and for which $\overline{u}_i$ is integrable with respect to $\nu_i$ for all $i=1,\ldots,n$, i.e.,\footnote{
   The choice of $(\overline{u}_i)_{i=1}^n$ does not affect our results, provided that $\M\ne \varnothing$. Note, however, that a pointwise smaller $\overline{u}_i$ yields a larger $\M$.
}
\begin{align*}
   \M \eqdef \left\{\nu \in \Sigma(T) : \eta \ll \nu, \enspace \forall i=1,\ldots,n, \int \overline{u}_i(t_i) \nu_i(dt_i) < \infty \right\}.
\end{align*}
For $\nu \in \M$, we construct a surrogate game, denoted by $G^*(\nu)$, as follows. Each player $i$'s action space is taken to be $\RC_i$. For each action profile $\gamma \in \RC$ and each player $i$, the payoff function is given by
\begin{align}
   \label{eq:aux_payoff}
      \widetilde{U}_i(\gamma;\nu) \eqdef\int U_i(\gamma;t_i) \nu_i(dt_i).
\end{align}

\begin{definition}
   A profile of actions $\widetilde{\gamma}^* \in \RC$ is a \emph{Nash equilibrium} of a game $G^*(\nu)$, with $\nu\in \M$, if for each $i=1,\ldots,n$ and $\gamma_i \in \RC_i$,
   \begin{align}
      \label{eq:NE}
      \widetilde{U}_i(\widetilde{\gamma}_i^*,\widetilde{\gamma}_{-i}^*;\nu) \ge \widetilde{U}_i(\gamma_i,\widetilde{\gamma}_{-i}^*;\nu).
   \end{align}
\end{definition}

We introduce a characterization of the Bayesian equilibria in the original game $G$ in terms of the Nash equilibria in the surrogate games $G^*(\nu)$, $\nu \in \M$. Let $\BE$ denote the set of Bayesian equilibria in the game $G$. For each $\nu \in \M$, we define $\NE_\nu$ to be the set of Nash equilibria in the game $G^*(\nu)$. Also, let
\[
   \NE \eqdef \bigcap_{\nu \in \M} \NE_\nu,
\]
the set of $\gamma \in \RC$ that are Nash equilibria in $G^*(\nu)$ for all $\nu \in \M$. We present a characterization of the set of Bayesian equilibria in the game $G$ as follows.

\begin{theorem}[Characterization of Bayesian Equilibria]
   \label{thm:char}
   Suppose that Assumption \ref{assump:payoff-1} holds and $\M \ne \varnothing$. Then,
   \[
      \BE = \NE.
   \]
\end{theorem}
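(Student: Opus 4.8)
The plan is to prove the two set inclusions $\mathsf{BE} \subseteq \mathsf{NE}$ and $\mathsf{NE} \subseteq \mathsf{BE}$ separately. A fact I would use throughout is the uniform bound $|U_i(\gamma;t_i)| \leq \overline u_i(t_i)$: since $|u_i| \leq \overline u_i(t_i)$ and the $\gamma_j(t_j,\csdot)$ and $\eta_i(t_i,\csdot)$ are probability measures, both $V_i$ and then $U_i$ inherit this bound. Because $\overline u_i$ is $\nu_i$-integrable for every $\nu \in \M$ by the definition of $\M$, each $\widetilde U_i(\gamma;\nu) = \int U_i(\gamma;t_i)\,\nu_i(dt_i)$ is finite, so integration against $\nu_i$ is always legitimate below.

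For the inclusion $\mathsf{BE} \subseteq \mathsf{NE}$, I would take $\gamma^* \in \mathsf{BE}$ and fix $\nu \in \M$, an index $i$, and a deviation $\gamma_i \in \RC_i$. The Bayesian-equilibrium inequality $U_i(\gamma_i^*,\gamma_{-i}^*;t_i) \geq U_i(\gamma_i,\gamma_{-i}^*;t_i)$ holds for \emph{every} $t_i$; integrating both sides against $\nu_i$ gives $\widetilde U_i(\gamma_i^*,\gamma_{-i}^*;\nu) \geq \widetilde U_i(\gamma_i,\gamma_{-i}^*;\nu)$, so $\gamma^* \in \mathsf{NE}_\nu$. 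Since $\nu \in \M$ was arbitrary, $\gamma^* \in \mathsf{NE}$. This direction is routine.

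The substantive direction is $\mathsf{NE} \subseteq \mathsf{BE}$, where a \emph{pointwise} inequality in $t_i$ must be recovered from the \emph{integrated} Nash inequalities. The key structural observation is that $U_i(\gamma_i,\gamma_{-i}^*;t_i)$ depends on the deviation $\gamma_i$ only through the single slice $\gamma_i(t_i,\csdot)$, because in $V_i$ player $i$'s strategy is evaluated at the fixed type $t_i$ while the others' types are integrated out by $\eta_i(t_i,\csdot)$. Fixing a target type $t_i^0$ and a deviation $\gamma_i$, I would (i) pick any $\nu \in \M$ (nonempty by hypothesis) and perturb only its $i$-th coordinate to $\nu_i' \eqdef \tfrac12\nu_i + \tfrac12\delta_{t_i^0}$; one checks $\nu' \in \M$, since $\int \overline u_i\,d\nu_i' = \tfrac12\int \overline u_i\,d\nu_i + \tfrac12\overline u_i(t_i^0) < \infty$, while $\eta \ll \nu'$ survives because player $i$'s own constraint $\eta_i(t_i,\csdot)\ll\nu_0\otimes\nu_{-i}$ is untouched and, for $j \neq i$, $\nu_i \ll \nu_i'$ yields $\nu_0\otimes\nu_{-j}\ll\nu_0\otimes\nu_{-j}'$. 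Next I would (ii) define $\gamma_i^\dagger$ to equal $\gamma_i^*$ off $t_i^0$ and $\gamma_i(t_i^0,\csdot)$ at $t_i^0$. By the slice-dependence, the function $t_i \mapsto U_i(\gamma^*;t_i) - U_i(\gamma_i^\dagger,\gamma_{-i}^*;t_i)$ vanishes off $\{t_i^0\}$, so applying $\gamma^* \in \mathsf{NE}_{\nu'}$ to the deviation $\gamma_i^\dagger$ gives
\[
   0 \leq \int \bigl[U_i(\gamma^*;t_i) - U_i(\gamma_i^\dagger,\gamma_{-i}^*;t_i)\bigr]\,\nu_i'(dt_i) = \nu_i'(\{t_i^0\})\bigl[U_i(\gamma^*;t_i^0) - U_i(\gamma_i,\gamma_{-i}^*;t_i^0)\bigr].
\]
Since $\nu_i'(\{t_i^0\}) \geq \tfrac12 > 0$, this yields $U_i(\gamma_i^*,\gamma_{-i}^*;t_i^0) \geq U_i(\gamma_i,\gamma_{-i}^*;t_i^0)$, and as $t_i^0$ and $\gamma_i$ were arbitrary, $\gamma^* \in \mathsf{BE}$.

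The hard part will be the measurability bookkeeping underlying step (ii): for $\gamma_i^\dagger$ to be an admissible behavioral strategy and for the integral to collapse onto $\{t_i^0\}$, one needs $\{t_i^0\} \in \T_i$ and $\T_i$-measurability of $t_i \mapsto U_i(\gamma;t_i)$. This is exactly where Assumption \ref{assump:basic} is used: universal completeness of $\T_i$ ensures that the relevant singletons and integrated payoffs are measurable, so that $\gamma_i^\dagger \in \RC_i$ and the displayed identity is valid. I would verify these measurability claims first, as they are the only non-routine ingredient; everything else reduces to the perturbation $\nu_i'$ and the slice-dependence of $U_i$.
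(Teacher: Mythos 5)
Your first inclusion and your choice of perturbation ($\nu_i'$ a mixture of $\nu_i$ with the Dirac measure $\delta_{t_i^0}$, which is a legitimate probability measure on $(T_i,\T_i)$ regardless of whether singletons are measurable) match the paper's proof. The gap is in step (ii): the pasted deviation $\gamma_i^\dagger$, equal to $\gamma_i^*$ off $t_i^0$ and to $\gamma_i(t_i^0,\csdot)$ at $t_i^0$, is a probability kernel only if $\{t_i^0\} \in \T_i$, and nothing in the hypotheses delivers that. Universal completeness of $\T_i$ does \emph{not} imply that singletons are measurable: the trivial $\sigma$-algebra $\{\varnothing, T_i\}$ is its own universal completion (every probability measure on it has $\varnothing$ as its only null set), yet contains no singleton of an uncountable $T_i$. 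Worse, Theorem \ref{thm:char} does not even assume Assumption \ref{assump:basic} --- its only hypotheses are Assumption \ref{assump:payoff-1} and $\M \ne \varnothing$ --- so your appeal to universal completeness invokes a hypothesis that is not available, and one that would not suffice anyway. Without $\{t_i^0\} \in \T_i$, $\gamma_i^\dagger \notin \RC_i$, it is not an admissible deviation in $G^*(\nu')$, and your collapse of the integral onto $\{t_i^0\}$ is not even a statement about measurable sets.

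The paper's proof avoids this entirely by never localizing the deviation: arguing by contradiction, it uses the \emph{original} deviation $\gamma_i$ globally against the mixture $\nu_i' = \lambda\nu_i + (1-\lambda)\delta_{t_i'}$, which splits the surrogate payoff gap as
\[
   \widetilde{U}_i(\gamma_i^*,\gamma_{-i}^*;\nu') - \widetilde{U}_i(\gamma_i,\gamma_{-i}^*;\nu')
   = \lambda\bigl(\text{gap under } \nu\bigr) + (1-\lambda)\bigl(U_i(\gamma_i^*,\gamma_{-i}^*;t_i') - U_i(\gamma_i,\gamma_{-i}^*;t_i')\bigr),
\]
and then sends $\lambda \downarrow 0$: the first term is bounded (by integrability of $\overline u_i$ under $\nu_i$, exactly the uniform bound you noted at the outset), so the strictly negative pointwise term eventually dominates, contradicting $\gamma^* \in \mathsf{NE}_{\nu'}$. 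Your fixed weight $\lambda = \tfrac12$ works only because the pasting forces the $\nu$-integrated term to vanish identically --- once the pasting is unavailable, you must take the limit in $\lambda$, and then your argument becomes the paper's. This repair is immediate (your verification that $\nu' \in \M$ already covers arbitrary $\lambda \in (0,1]$), but as written the proof is incomplete at its only non-routine step.
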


\begin{proof}
   Suppose that $\gamma^* \in \BE$. Using \eqref{eq:aux_payoff}, we find that for any $i=1,\ldots, n$, $\gamma_i \in \RC_i$, and $\nu\in \M$,
   \begin{align*}
      \widetilde{U}_i(\gamma_i^*,\gamma_{-i}^*;\nu) \ge \widetilde{U}_i(\gamma_i,\gamma_{-i}^*;\nu).
   \end{align*}
   Hence, $\gamma^* \in \NE_\nu$ for all $\nu \in \M$, i.e., $\BE \subset \NE$.

   To show the other inclusion, take $\gamma^* \in\NE$. Suppose that $\gamma^* \notin \BE$, i.e., there exist $i \in \{1,\ldots,n\}$, $t'_i \in T_i$, and $\gamma_i \in \RC_i$ such that
   \begin{align*}
      U_i(\gamma_i^*,\gamma_{-i}^*;t'_i) < U_i(\gamma_i,\gamma_{-i}^*;t'_i).
   \end{align*}
   Choose any $\nu \in \M$ and consider $\nu' \in \Sigma(T)$ such that $\nu_i' = \lambda \nu_i + (1-\lambda) \delta_{t'_i}$ and $\nu_{-i}' = \nu_{-i}$, where $\lambda \in (0,1]$ and $\delta_x$ is the Dirac measure at $x\in T_i$. Then $\nu_i \ll \nu_i'$ and, by Assumption \ref{assump:payoff-1},
   \begin{align*}
      \int \overline{u}_i(t_i) \nu_i'(dt_i) = \lambda \int \overline{u}_i(t_i) \nu_i(dt_i) + (1-\lambda) \overline{u}_i(t'_i) < \infty.
   \end{align*}
   Hence, $\nu' \in \M$. In light of \eqref{eq:aux_payoff}, we can write
   \begin{align}
      \label{eq:payoff_diff}
      \begin{aligned}
         \widetilde{U}_i(\gamma_i^*,\gamma_{-i}^*; \nu') - \widetilde{U}_i(\gamma_i,\gamma_{-i}^*; \nu') &= \lambda \left(\widetilde{U}_i(\gamma_i^*,\gamma_{-i}^*; \nu) - \widetilde{U}_i(\gamma_i,\gamma_{-i}^*; \nu)\right) \\
         &\quad + (1-\lambda) \left(U_i(\gamma_i^*,\gamma_{-i}^*; t'_i) - U_i(\gamma_i,\gamma_{-i}^*; t'_i)\right).
      \end{aligned}
   \end{align}
   Since the first difference on the right-hand side is finite by the integrability condition in the definition of $\M$, taking $\lambda$ sufficiently small makes the difference in the expected payoffs in \eqref{eq:payoff_diff} negative, so that $\gamma_i$ is a profitable deviation for player $i$ in the game $G^*(\nu')$. Consequently, $\gamma^* \notin \NE_{\nu'}$, which contradicts $\gamma^*\in \NE$.
\end{proof}

\begin{remark*}
   Theorem \ref{thm:char} is useful for analyzing the set of Bayesian equilibria in the game $G$ through the Nash equilibria of the complete information games $G^*(\nu)$, $\nu \in \M$. One can construct an alternative characterization of the Bayesian equilibria in a rather trivial manner. First, take $\M_{\delta} \subset \Sigma(T)$, where each $T_i$-marginal of $\nu\in \M_{\delta}$, $i=1,\ldots,n$, is a Dirac measure. (For such $\nu$, the surrogate game $G^*(\nu)$ and its set of Nash equilibria $\NE_\nu$ are defined by \eqref{eq:aux_payoff} exactly as for $\nu\in\M$.) Then, trivially we have
   \begin{align*}
      \BE = \bigcap_{\nu \in \M_{\delta}} \NE_\nu.
   \end{align*}
   Note that this characterization neither implies nor follows from our characterization result because the sets $\M_{\delta}$ and $\M$ are generally incomparable (neither contains the other). This trivial characterization is not very useful because an action profile $\gamma$ that is a Nash equilibrium in the game $G^*(\nu)$ \emph{for all} $\nu \in \M_\delta$ is not guaranteed to exist. In contrast, the characterization in Theorem \ref{thm:char} involves $G^*(\nu)$ \emph{only for} $\nu \in \M$. The existence of Nash equilibria in $G^*(\nu)$ can be established in a standard manner because, as we shall see below, we can transform the game $G^*(\nu)$ into a Bayesian game with a common prior $\nu$, which is a product measure and therefore satisfies absolute continuity of information.
\end{remark*}

\subsection{Main Results}

The characterization theorem does not directly lead to the existence of Bayesian equilibria in the game $G$ because we need to consider all the product probability measures in $\M$. To guarantee their existence, we introduce an additional assumption on the payoff functions.

\begin{assumption}
   \label{assump:payoff-2}
   For each $i=1,\ldots,n$, $a_{-i} \in A_{-i}$, and $t \in T$, the map $a_i \mapsto u_i(a_i,a_{-i},t)$ is upper semicontinuous on $A_i$.\footnote{
      This condition together with Assumption \ref{assump:basic} is used to guarantee the existence of measurable maximizers in the proof of our main result.
   }
\end{assumption}

Assumption \ref{assump:payoff-2} allows for payoffs that are discontinuous in a player's own actions, requiring only upper semicontinuity. It is weaker than the joint continuity in actions assumed in \cite{Balder:88:MOR}. Many natural games satisfy the stronger condition of joint upper semicontinuity in actions, as illustrated in the example below.

\begin{example}[Public Good Provision with a Threshold]
   Consider the threshold public good provision (contribution) game studied by \cite{Menezes/Monteiro/Temimi:01:JME}, where each player $i$ chooses a contribution $a_i \in [0, \overline a]$, and the public good is provided if and only if $\sum_{j=1}^n a_j \ge K> 0$, with contributions not refunded if the threshold is not met. The payoff function is
   \begin{align*}
      u_i(a, t) = \phi_i(t_i) \cdot 1\!\left\{\sum_{j=1}^n a_j \ge K\right\} - a_i,
   \end{align*}
   where $\phi_i(t_i) > 0$ is the type-dependent value of the good. It is straightforward to verify that the map $a\mapsto u_i(a, t)$ is upper semicontinuous for each $t\in T$.
\end{example}

We now present our main existence result, starting with a lemma that establishes the existence of a Bayesian equilibrium given a Nash equilibrium in a related surrogate game.

\begin{lemma}
\label{lemm:BE_main}
Suppose that Assumptions \ref{assump:basic}, \ref{assump:payoff-1}, and \ref{assump:payoff-2} hold and let $\widetilde{\gamma}^*\in \NE_\nu$ for some $\nu\in \M$. Then, there exists a Bayesian equilibrium $\gamma^*$ in the game $G$ such that $\gamma^*\sim_{\nu}\widetilde{\gamma}^*$.
\end{lemma}

The proof of Lemma \ref{lemm:BE_main} is relegated to Appendix \ref{app1}. In a nutshell, it proceeds by modifying the Nash equilibrium $\widetilde{\gamma}^*$ of the surrogate game $G^*(\nu)$ into a Bayesian equilibrium of the original game $G$. For each player $i$, we partition the type space $T_i$ into the set $\widetilde{T}_i$ of types at which $\widetilde{\gamma}_i^*$ achieves the maximal expected payoff against $\widetilde{\gamma}_{-i}^*$, and its complement. On $\widetilde{T}_i$, we retain the original strategy, while on $\widetilde{T}_i^c$, we replace it with a measurable best-response selector whose existence is guaranteed under the assumptions of the result. The Nash equilibrium condition \eqref{eq:NE} implies that $\widetilde{T}_i^c$ is a $\nu_i$-null set, so the modified strategy $\gamma_i^*$ is a $\nu_i$-version of $\widetilde{\gamma}_i^*$. The key observation is that modifying strategies on $\nu_i$-null sets does not affect the other players' expected payoffs, which follows from absolute continuity of beliefs. Consequently, for each player $i$ and type $t_i \in T_i$, facing the modified opponents' strategies $\gamma_{-i}^*$ yields the same expected payoff as facing $\widetilde{\gamma}_{-i}^*$, while the modified own strategy $\gamma_i^*$ achieves optimality at every type. This makes $\gamma^*$ a Bayesian equilibrium.

One consequence of Lemma \ref{lemm:BE_main} is that the prediction of a game $G^*(\nu)$ remains invariant under changing the product probability measure $\nu\in \M$ to any equivalent $\nu'\in \M$.

\begin{proposition}
   \label{prop:mutual_abs_cont}
   Suppose that Assumptions \ref{assump:basic}, \ref{assump:payoff-1}, and \ref{assump:payoff-2} hold. Suppose further that $\nu,\nu' \in \M$ are such that $\nu \sim \nu'$. Then,
   \begin{align*}
      \NE_\nu = \NE_{\nu'}.
   \end{align*}
\end{proposition}

\begin{proof}
   Suppose that $\widetilde{\gamma}^*\in \NE_\nu$. By Lemma \ref{lemm:BE_main}, there exists a Bayesian equilibrium $\gamma^*$ in the game $G$ such that $\widetilde{\gamma}^* \sim_\nu \gamma^*$. Then, by Theorem \ref{thm:char}, $\gamma^*\in\NE_{\nu'}$, and since $\nu \sim \nu'$, we have $\gamma^*\sim_{\nu'}\widetilde{\gamma}^*$, implying $\widetilde{\gamma}^*\in \NE_{\nu'}$. Thus, $\NE_\nu \subset \NE_{\nu'}$. Reversing the roles of $\nu$ and $\nu'$ gives the other inclusion, $\NE_{\nu'}\subset\NE_\nu$.
\end{proof}

In light of Proposition \ref{prop:mutual_abs_cont}, if we partition $\M$ into equivalence classes where two product probability measures $\nu$ and $\nu'$ belong to the same class if and only if $\nu \sim \nu'$, the set $\NE_\nu$ depends only on the equivalence class of $\nu$. The main reason for this result is the following. Since we do not require the belief maps to be consistent under $\nu$, we can fix these belief maps while changing $\nu$ within $\M$. This does not alter the prediction of the game as long as measure zero sets remain invariant under the change of the product probability measures.

We turn to the existence result. The following theorem shows that a Bayesian equilibrium exists in the game $G$ if and only if a Nash equilibrium exists in the surrogate game $G^*(\nu)$ for some $\nu \in \M$.

\begin{theorem}[Existence of Bayesian Equilibria]
   \label{thm:existence}
   Suppose that Assumptions \ref{assump:basic}, \ref{assump:payoff-1}, and \ref{assump:payoff-2} hold and $\M \ne \varnothing$. Then, the following statements are equivalent.
   \begin{enumerate}[label=(\roman*)]
      \item $\BE \ne \varnothing$.

      \item $\NE_\nu \ne \varnothing$ for some $\nu \in \M$.
   \end{enumerate}
\end{theorem}

\begin{proof}
   By Theorem \ref{thm:char}, (i) implies (ii). Conversely, Lemma \ref{lemm:BE_main} shows that (ii) implies (i).
\end{proof}

Theorem \ref{thm:existence} offers a general method for establishing the existence of a Bayesian equilibrium in a game without a common prior by applying existing Nash equilibrium existence results. Indeed, $\NE_\nu$ consists of pure strategy equilibria in a complete information game where we take the space of behavioral strategies as the action space of the game. The existence of a Nash equilibrium in this setting is well studied, particularly for discontinuous games (see, e.g., \citeay{Reny:99:Eca}, \citeay{Monteiro/Page:07:JET}, \citeay{McLennan/Monteiro/Tourky:11:Eca}, and \citeay{Barelli/Meneghel:13:Eca}).

The absolute continuity of beliefs requirement in Theorem \ref{thm:existence} is not innocuous. \cite{Simon:03:IJM} constructed an incomplete information game with a finite action set where there is no measurable Bayesian equilibrium. \cite{Hellman:14:JET} presented a game with no $\epsilon$-Bayesian equilibrium. In the example below, we consider a modified version of this game presented in \cite{Friedenberg/Meier:17:ET} and show that the belief maps fail the condition of absolute continuity of beliefs, i.e., $\M = \varnothing$.

\begin{example}[\citeay{Hellman:14:JET}, \citeay{Friedenberg/Meier:17:ET}]
   \label{example:game_that_fails_abs_cont_beliefs}
   Consider the following game with two players $i=1,2$. The set of payoff states $T_0=\{-1,1\}$, and the action space for player $i$ is given by $A_i = \{L_i,M_i\}$. For each value $t_0\in T_0$, the payoff matrix is given in Table \ref{table:payoff_matrices_FM}.

   \begin{table}[t]
      \def\arraystretch{1.2}
      \centering
      \begin{minipage}{0.45\textwidth}
         \centering
         \begin{tabular}{|x{1.4cm}|c|c|}
            \hline
            $t_0=-1$ & $L_2$& $M_2$\\
            \hline
            $L_1$ & $3$, $3$ & $2$, $2$ \\
            \hline
            $M_1$ & $2$, $2$ & $3$, $3$ \\
            \hline
         \end{tabular}
      \end{minipage}
      \ 
      \begin{minipage}{0.45\textwidth}
         \centering
         \begin{tabular}{|x{1.4cm}|c|c|}
            \hline
            $t_0=1$ & $L_2$& $M_2$\\
            \hline
            $L_1$ & $4$, $4$ & $6$, $7$ \\
            \hline
            $M_1$ & $7$, $6$ & $4$, $4$ \\
            \hline
         \end{tabular}
      \end{minipage}
      \vspace{1.2em}
      \caption{\footnotesize The payoff matrix, replicated from Figure 12 of \cite{Friedenberg/Meier:17:ET}.}
      \label{table:payoff_matrices_FM}
   \end{table}

   To construct type sets $T_i$, let $X \eqdef \{-1,1\}^{\NN \cup \{0\}}$ be the Cantor space with generic elements $x = (x_0,x_1,\ldots)$, $x_j \in \{-1,1\}$. Let $\Omega \eqdef \Omega_1 \cup \Omega_2$, where $\Omega_i \eqdef \{i\} \times X$. We endow $X$ with the Borel $\sigma$-algebra of the product topology. Define the maps $f: \Omega \rightarrow \Omega$ and $w: \Omega \rightarrow T_0$ by $f(i,x_0,x_1,\ldots)=(-i,x_1,x_2,\ldots)$, where $-i$ represents the opponent of player $i$, and $w(i,x_0,x_1,\ldots) = x_0\in T_0$, respectively. Note that
   \begin{align*}
      f^{-1}(\{(i,x_0,x_1,\ldots)\}) = \{(-i,1,x_0,x_1,\ldots),(-i,-1,x_0,x_1,\ldots)\}. 
   \end{align*}
   We take the set of types of player $i$ as
   \begin{align*}
      T_i = \left\{\{\omega\} \cup f^{-1}(\{\omega\}): \omega \in \Omega_i\right\}.
   \end{align*}

   To introduce belief maps, define the bijection $\tau_i: T_i \rightarrow \Omega_i$ by $\tau_i(t_i) = \omega$ if and only if $t_i = \{\omega\} \cup f^{-1}(\{\omega\})$. We endow $T_i$ with the coarsest topology that makes $\tau_i$ continuous. Note that $\Omega_i$ is a compact metrizable space and $\tau_i$ is a homeomorphism. Hence, $T_i$ is a compact metrizable space. Finally, consider the belief map $\eta_i$ given by $\eta_i(t_i,\{(t_0,t'_{-i})\}) = 1$ if $w(\tau_i(t_i)) = t_0$ and $\tau_i(t_i)\in t'_{-i}$. \cite{Hellman:14:JET} (see also the Online Appendix of \citeay{Friedenberg/Meier:17:ET}) showed that this Bayesian game does not have a universally measurable Bayesian equilibrium.

   The belief maps $\eta_i$, $i=1,2$, in this example fail the absolute continuity condition. To see this, note that for each $t_i\in T_i$, the support of $\eta_i(t_i,\csdot)$ is a singleton, say, $\{\varphi_i(t_i)\}$. By the construction of $\eta_i$, $\varphi_i$ forms an injective map from $T_i$ to $T_{-i}$. Since $T_i$ is uncountable, the image $\{\varphi_i(t_i):t_i\in T_i\}$ is an uncountable set. Thus, $\eta_i$ cannot be dominated by a $\sigma$-finite measure on $T_{-i}$ because no $\sigma$-finite measure can assign positive measure to each element of $\{\varphi_i(t_i):t_i\in T_i\}$.
\end{example}

The common prior assumption in the literature is often motivated by the tractability it provides for analyzing equilibria in the game. One might worry that this tractability vanishes when the common prior assumption is relaxed. We present a result that characterizes the set of Bayesian equilibria as the set of Bayesian Nash equilibria in a surrogate incomplete information game up to null sets. We first reformulate the game $G^*(\nu)$ into a more familiar Bayesian game with a common prior as follows. Note that for $\nu \in \M$,
\begin{align*}
   \nu_i \otimes \eta_i \ll \nu_i \otimes \nu_{-i}, \quad i=1,\ldots,n.
\end{align*}
Let $f_{\nu,i}$ be a Radon--Nikodym derivative of $\nu_i \otimes \eta_i$ with respect to $\nu_i \otimes \nu_{-i}$. For $a\in A$, $t\in T$, and $\gamma\in \RC$, define
\begin{align*}
   \widetilde{u}_i(a,t;\nu) \eqdef u_i(a,t) \cdot f_{\nu,i}(t_i,t_{-i})
\end{align*}
and
\begin{align*}
   \widetilde{V}_i(\gamma;t,\nu) \eqdef \int_{A_1} \cdots \int_{A_n} \widetilde{u}_i(a,t;\nu) \gamma_1(t_1,da_1) \cdots \gamma_n(t_n,da_n).
\end{align*}
Given $\gamma \in \RC$, we rewrite the payoff function of player $i$ in \eqref{eq:aux_payoff} as follows:
\begin{align}
   \label{eq:aux_payoff2}
   \widetilde{U}_i(\gamma;\nu) = \int \widetilde{V}_i(\gamma;t,\nu) \nu(d t).
\end{align}

Under this reformulation, the auxiliary Bayesian game consists of $n$ players, each with payoff function $\widetilde{u}_i$ and private type $t_i \in T_i$, where the type profile $t\in T$ is drawn from the product probability measure $\nu$. Player $i$'s strategy space is the set of behavioral strategies $\RC_i$. Then, the ex ante expected payoff of player $i$ when the players play the behavioral strategies $\gamma_1,\ldots,\gamma_n$ is given by \eqref{eq:aux_payoff2}. We denote this incomplete information game by $\widetilde{G}^*(\nu)$. Note that each player's belief about the payoff state and the other players' types is given by $\nu_{-i}$ in this game. Hence, the game has the product measure $\nu$ as the common prior. A Bayesian Nash equilibrium of this game is defined to be a behavioral strategy profile $(\widetilde{\gamma}_1^*,\ldots,\widetilde{\gamma}_n^*)$ such that \eqref{eq:NE} holds for each $i =1,\ldots,n$ and $\gamma_i \in \RC_i$. Let the set of Bayesian Nash equilibria of this game be denoted by $\BNE_{\nu}$. By definition, we have $\NE_\nu = \BNE_\nu$. Thus, we obtain the following characterization of Bayesian equilibria up to $\nu$-null sets.

\begin{proposition}
   Suppose that Assumptions \ref{assump:basic}, \ref{assump:payoff-1}, and \ref{assump:payoff-2} hold and $\M \ne \varnothing$. Then, for each $\nu \in \M$,
   \begin{align*}
      \BNE_\nu = \overline{\BE}_{\nu},
   \end{align*}
   where $\overline{\BE}_{\nu} \eqdef \left\{\gamma \in \RC: \gamma \sim_\nu \gamma'\text{ for some } \gamma' \in \BE\right\}$.
\end{proposition}

\begin{proof}
   Note that for each $\nu \in \M$, $\NE_\nu = \BNE_\nu.$ Hence, by Lemma \ref{lemm:BE_main}, $\BNE_\nu \subset \overline{\BE}_\nu$ for all $\nu \in \M$. On the other hand, fix $\nu\in\M$, take $\gamma \in \overline{\BE}_\nu$, and let $\gamma^* \in \BE$ such that $\gamma \sim_\nu \gamma^*$. Then, $\gamma^* \in \NE_\nu$ by Theorem \ref{thm:char}, implying $\gamma \in \NE_\nu$.
\end{proof}

An immediate consequence of this proposition (or of Theorem \ref{thm:existence}) is that $\BE \ne \varnothing$ if and only if $\BNE_\nu \ne \varnothing$ for some $\nu \in \M$. Combining this result with the existence result in \cite{Balder:88:MOR}, we can show that whenever the payoff functions are continuous in actions, Assumptions \ref{assump:basic}--\ref{assump:payoff-1} are satisfied, and $\M\ne \varnothing$, a Bayesian equilibrium exists for the game $G$. As this can be useful in many applications, we state this result as a corollary below.

\begin{corollary}
   \label{cor:BE_existence_1}
   Suppose that Assumptions \ref{assump:basic}--\ref{assump:payoff-1} hold and $\M \ne \varnothing$. Furthermore, suppose that for each $i=1,\ldots,n$ and $t \in T$, the map $a\mapsto u_i(a,t)$ is continuous on $A$. Then, there exists a Bayesian equilibrium in the game $G$.
\end{corollary}

The proof of Corollary \ref{cor:BE_existence_1} is relegated to Appendix \ref{app1}. It proceeds by constructing a surrogate Bayesian game having a product measure as a common prior, and verifying the conditions of Theorem~3.1 of \cite{Balder:88:MOR}. The continuity of $u_i$ in actions is stronger than Assumption \ref{assump:payoff-2} and ensures that these conditions are satisfied.

As mentioned before, our existence result extends to games with payoffs discontinuous in actions under certain conditions on the payoff functions (see \citeay{Reny:20:ARE} for a review of the literature on equilibrium existence in discontinuous games). While we believe that the result applies to many such settings, for brevity and concreteness, we will draw on the result from \cite{Carbonell-Nicolau/McLean:18:MOR} as an illustration.

We consider the incomplete information game $G$ as before. For each player $i$, let $\H_i$ be the set of pure strategies for player $i$, i.e., $\H_i$ is the set of measurable maps from $(T_i,\T_i)$ to $(A_i,\B(A_i))$. We adopt the following definition introduced by \cite{Carbonell-Nicolau/McLean:18:MOR}, who adapted concepts from \cite{Monteiro/Page:07:JET} to incomplete information games.

\begin{definition}
   A payoff profile $u$ is \emph{uniformly payoff secure} if for each $i =1,\ldots,n$, $\epsilon > 0$, and $h_i \in \H_i$, there exists $h_i^* \in \H_i$ such that for each $a_{-i}\in A_{-i}$ and $t \in T$, there is a neighborhood $N(a_{-i})$ of $a_{-i}$ with
   \begin{align*}
      u_i(h_i^*(t_i),a'_{-i},t) > u_i(h_i(t_i),a_{-i},t) - \epsilon
   \end{align*}
   for all $a'_{-i} \in N(a_{-i})$.
\end{definition}

The following result is obtained by modifying Theorem 1 of \cite{Carbonell-Nicolau/McLean:18:MOR} to our setting and is useful for establishing the existence of Bayesian equilibria in the game $G$ with potentially discontinuous payoff functions. The main distinction from their setting is that we do not require the common prior assumption and allow for the belief map profile $\eta$ to be inconsistent.
\begin{corollary}
   \label{cor:BE_existence_2}
   Suppose that Assumptions \ref{assump:basic}--\ref{assump:payoff-1} hold and $\M \ne \varnothing$. Furthermore, suppose that $u$ is uniformly payoff secure, and for each $i=1,\ldots,n$ and $t \in T$, the map $a\mapsto u_i(a,t)$ is upper semicontinuous on $A$. Then, there exists a Bayesian equilibrium in the game $G$.
\end{corollary}

\begin{proof}
   The proof proceeds precisely as the proof of Theorem 1 of \cite{Carbonell-Nicolau/McLean:18:MOR}, replacing $u_i(\csdot,\csdot)f(\csdot)$ in their paper with $u_i(a,t)f_{\nu,i}(t_i,t_{-i})$. Details are omitted.
\end{proof}

Below we examine an application from \cite{Carbonell-Nicolau/McLean:18:MOR}: Cournot games. These have been studied extensively in many variants in the literature. For a review, we refer the reader to their paper.

\begin{example}[Incomplete Information Cournot Game]
   We consider a market for a single homogeneous good with $n$ firms competing in quantities. Let $T_1,\ldots,T_n$ denote the firms' type sets and let $T_0$ denote the set of common payoff states about which the firms are uncertain. We assume that $(T_i,\T_i)$, $i=1,\ldots, n$, satisfy Assumption \ref{assump:basic}. Each firm $i$ chooses a quantity $a_i$ from a compact set $A_i \subseteq [0,\infty)$. The payoff function of firm $i$ is given by
   \begin{align*}
      u_i(a,t) = a_i p\left(\sum_{j=1}^n a_j,t \right) - c_i(a_i,t),
   \end{align*}
   where $p(q,t)$ denotes the price of the good when the firms' individual payoff types are $t_1,\ldots,t_n$, the common payoff state is $t_0$, and the aggregate output is $q$, and $c_i(a_i,t)$ represents the cost of firm $i$ at $t=(t_0,\ldots,t_n)$ and its output level $a_i$. 

   We introduce the following conditions on the maps $p$ and $c_i$.
   \begin{enumerate}[label=(\alph*),topsep=0em]
      \item The maps $p,c_i: [0,\infty) \times T \to [0,\infty)$ are $\B([0,\infty)) \otimes \T$-measurable and bounded.

      \item For each $t \in T$, the map $q \mapsto p(q,t)$ is continuous on $\left\{\sum_{i=1}^n a_i: a_i \in A_i \right\}$.

      \item For each $i=1,\ldots,n$ and $t \in T$, the map $a_i \mapsto c_i(a_i,t)$ is lower semicontinuous.
   \end{enumerate}

   The conditions (a) and (b) are taken from \cite{Carbonell-Nicolau/McLean:18:MOR}, and (c) is a stronger variant of their assumption on the cost function, needed to satisfy the conditions of Corollary \ref{cor:BE_existence_2}.\footnote{
      \cite{Carbonell-Nicolau/McLean:18:MOR} impose lower semicontinuity only on the aggregate cost $\sum_{i=1}^n c_i(\csdot,t)$. In our setting, the relevant object is instead the weighted aggregate cost $\sum_{i=1}^n c_i(\csdot,t)f_{\nu,i}(t_i,t_{-i})$, and (c) provides a primitive condition on each $c_i$ that ensures its lower semicontinuity.
   }
   This completes the description of the Cournot game, which we denote by $G_{\mathsf{C}}$. Then, we obtain the following result by applying Corollary \ref{cor:BE_existence_2}.

   \begin{corollary}
      Suppose that the belief map profile $\eta$ in the game $G_{\mathsf{C}}$ is such that $\M \ne \varnothing$. Then, $G_{\mathsf{C}}$ has a Bayesian equilibrium.
   \end{corollary}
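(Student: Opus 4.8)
The plan is to apply Corollary~\ref{cor:BE_existence_2} to the Cournot game $G_{\mathsf{C}}$. Since $\M \ne \varnothing$ is assumed, and Assumption~\ref{assump:basic} holds because each $A_i \subseteq [0,\infty)$ is compact (hence compact metrizable) and the universal completeness of the $\T_i$ can be arranged as a standing hypothesis of the framework, the work reduces to verifying the remaining three conditions of the corollary: Assumption~\ref{assump:payoff-1}, Assumption~\ref{assump:payoff-2}, and the two hypotheses of Corollary~\ref{cor:BE_existence_2} (uniform payoff security of $u$ and upper semicontinuity of $\sum_{i=1}^n u_i(\csdot,s,t)$).

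First I would dispatch Assumption~\ref{assump:payoff-1}. Because $p$ and each $c_i$ are bounded by condition~(a), say $|p| \le \overline{p}$ and $|c_i| \le \overline{c}_i$, and each $a_i$ ranges over a compact $A_i \subseteq [0,\infty)$ with some bound $\overline{a}_i \eqdef \max A_i$, we have $|u_i(a,s,t)| \le \overline{a}_i \,\overline{p} + \overline{c}_i \eqdef \overline{u}_i$, a constant. A constant map is trivially $\T_i$-measurable and integrable against any probability measure, so Assumption~\ref{assump:payoff-1} holds and moreover the integrability condition defining $\M$ is automatic. Next, for Assumption~\ref{assump:payoff-2}, fix $a_{-i}$ and $(s,t)$ and consider $a_i \mapsto a_i\, p(\sum_j a_j,s,t) - c_i(a_i,s,t)$. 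By condition~(b) the price term is continuous in $q$, hence continuous in $a_i$ (as the other components of the sum are fixed), so $a_i \mapsto a_i\,p(\sum_j a_j,s,t)$ is continuous; by condition~(c) the map $a_i \mapsto c_i(a_i,s,t)$ is lower semicontinuous, so $a_i \mapsto -c_i(a_i,s,t)$ is upper semicontinuous. The sum of a continuous and an upper semicontinuous function is upper semicontinuous, giving Assumption~\ref{assump:payoff-2}.

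For the upper semicontinuity of $\sum_{i=1}^n u_i(\csdot,s,t)$, I would write this sum as $\bigl(\sum_{i=1}^n a_i\bigr) p\bigl(\sum_j a_j,s,t\bigr) - \sum_{i=1}^n c_i(a_i,s,t)$. The first term is continuous on $A$ because $q = \sum_j a_j$ is continuous in $a$ and $p(\csdot,s,t)$ is continuous in $q$ by~(b); the second term is a finite sum of lower semicontinuous functions by~(c), hence lower semicontinuous, so its negative is upper semicontinuous. Thus the whole map is upper semicontinuous on $A$. The one genuinely nontrivial condition is uniform payoff security. Here I would follow the construction in \cite{Carbonell-Nicolau/McLean:18:MOR}: given $h_i \in \H_i$ and $\epsilon > 0$, one seeks a deviation $h_i^*$ whose payoff is robust to small perturbations of the opponents' actions. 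The standard device exploits continuity of the price in aggregate output—firm $i$ can slightly perturb its own quantity so that, for any $a_{-i}$, a small neighborhood of $a_{-i}$ keeps the aggregate output near its nominal value and hence the price near its nominal level, while the boundedness and the structure of the profit function absorb the cost term up to $\epsilon$. The delicate point is that $h_i^*$ must be chosen uniformly over all $a_{-i}$ and all $(s,t)$ simultaneously, which is exactly where the continuity in~(b) together with boundedness in~(a) is used.

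The main obstacle is therefore establishing uniform payoff security with the required uniformity across $a_{-i}$ and $(s,t)$, since the neighborhood $N(a_{-i})$ may depend on $a_{-i}$ while $h_i^*$ may not. I expect this step to go through verbatim from the argument in \cite{Carbonell-Nicolau/McLean:18:MOR}, whose Cournot analysis establishes precisely this security property under conditions~(a) and~(b); our strengthening of the cost condition to~(c) only helps, as it is used solely to secure the semicontinuity hypotheses of Corollary~\ref{cor:BE_existence_2} and does not interfere with the payoff-security construction, which concerns perturbations of the opponents' actions. Once uniform payoff security is in hand, Corollary~\ref{cor:BE_existence_2} applies directly and yields a Bayesian equilibrium in $G_{\mathsf{C}}$, completing the proof.
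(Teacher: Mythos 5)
Your proposal is correct and follows essentially the same route as the paper: both reduce the claim to Corollary~\ref{cor:BE_existence_2} and import uniform payoff security from the proof of Corollary~4 of \cite{Carbonell-Nicolau/McLean:18:MOR}. Your explicit verifications of Assumptions~\ref{assump:payoff-1} and~\ref{assump:payoff-2} (via the boundedness in condition~(a) and the continuity/lower-semicontinuity in conditions~(b) and~(c)) and of the upper semicontinuity of $a \mapsto \sum_{i=1}^n u_i(a,s,t)$ simply spell out steps the paper leaves to the cited reference, and they are accurate (indeed, since $u_i$ is continuous in $a_{-i}$ here, even $h_i^* = h_i$ secures the payoff).
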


   \begin{proof}
      By following the proof of Corollary 4 of \cite{Carbonell-Nicolau/McLean:18:MOR}, we find that $u$ is uniformly payoff secure. Since for each $i=1,\ldots, n$ and $t\in T$, the map $a\mapsto u_i(a,t)$ is upper semicontinuous on $A$, the desired result follows from Corollary \ref{cor:BE_existence_2}.
   \end{proof}

   This result extends the existence of Bayesian equilibria in Cournot games to the setting with potentially inconsistent beliefs, at the cost of a stronger joint upper semicontinuity requirement on the payoffs.
\end{example}

\begin{remark*}
   Both the failure of absolute continuity of beliefs in Example \ref{example:failure_acb} and the non-existence in Example \ref{example:game_that_fails_abs_cont_beliefs} rely essentially on the relevant type sets being uncountable. When each $T_j$, $j=0,\ldots, n$, is at most countable, with $\T_j = 2^{T_j}$, the absolute continuity condition holds automatically because any product probability measure $\nu \in \Sigma(T)$ assigning positive mass to every point in $T$ satisfies $\eta_i(t_i,\csdot) \ll \nu_{-i}$ trivially for all $i=1,\ldots,n$ and $t_i \in T_i$. If, in addition, Assumption \ref{assump:payoff-1} holds, one may choose $\nu_i$ for $i = 1, \ldots, n$ along an enumeration $\{t_i^{(k)}\}$ of $T_i$ as $\nu_i(\{t_i^{(k)}\}) \propto 2^{-k}/(1+\overline{u}_i(t_i^{(k)}))$ to ensure $\int \overline{u}_i \, d\nu_i < \infty$, so that $\nu \in \M$. The existence of a Bayesian equilibrium then follows directly from Corollary \ref{cor:BE_existence_1} for continuous payoffs, or Corollary \ref{cor:BE_existence_2} for discontinuous payoffs satisfying its conditions. If, in addition, each $T_j$ and each $A_i$ are finite, this also recovers Theorem 10.42 of \cite{Maschler/Solan/Zamir:20:GameTheory}, which establishes existence in the belief-space framework via the agent-form game. Absolute continuity of beliefs is thus a substantive restriction only when some type spaces are uncountable.
\end{remark*}

\appendix
\bookmarksetupnext{level=part}
\section*{Appendix}

\renewcommand{\thesection}{\Alph{section}}
\makeatletter
\let\sectionname\@empty
\def\toclevel@section{1}
\makeatother

\section{Omitted Proofs}
\label{app1}

\begin{lemma}
   \label{lemm:aci_equivalence}
   Let $p$ be a probability measure on $(T,\T)$ and let $q\eqdef \bigotimes_{j=0}^n p\circ \pi_j^{-1}$. Then, $p\ll q$ if and only if $p\ll \nu$ for some $\nu\in \Sigma(T)$. 
\end{lemma}

\begin{proof}
   The necessity is trivial as $q\in \Sigma(T)$. We prove sufficiency. Suppose $p$ has a density $f$ with respect to some $\nu \in \Sigma(T)$. For each $j=0,\ldots, n$, let $p_{j}\eqdef p\circ \pi_j^{-1}$ and let $g_j$ be a version of $dp_j/d\nu_j$ taking values in $[0,\infty)$ (note that $p_j\ll \nu_{j}$). By standard arguments, $q\ll \nu$ and $g\eqdef\prod_{j=0}^n g_j=dq/d\nu$, $\nu$-a.e. We define $h\eqdef fg^{-1} \cdot 1\{g>0\}$ and claim that $h$ is a Radon--Nikodym derivative of $p$ with respect to $q$. Indeed, for any $E\in \T$,
   \[
      \int_E h\,dq = \int_E hg\,d\nu=\int_{E\cap \{g>0\}}f\, d\nu = p(E\cap \{g>0\}).
   \]
   It suffices to show that $p(\{g=0\})=0$. Notice that 
   \[
      \{g=0\}= \bigcup_{j=0}^n \pi_j^{-1}(\{g_j=0\}).
   \] 
   Therefore,
   \[
      p(\{g=0\})\le \sum_{j=0}^n p_j(\{g_j=0\})=\sum_{j=0}^n \int_{\{g_j=0\}} g_j d\nu_j=0,
   \]
   as required.
\end{proof}

\begin{proof}[Proof of Proposition \ref{prop:abs_cont}]
   For each $j=0,\ldots,n$, let $p_j\eqdef p\circ \pi_{j}^{-1}$ so that $q=\bigotimes_{j=0}^n p_j$.

   (i) By Lemma \ref{lemm:aci_equivalence}, it suffices to show that $p\ll \nu$. Consider $N\in \T$ with $\nu(N)=0$. For each $i=1,\ldots,n$ and $\nu_i$-almost all $t_i$, we have $\nu_{-i}(N^{t_i})=0$, where $N^{t_i}\eqdef\{t_{-i}\in T_{-i}: t\in N\}$ is the $t_i$-section of $N$. Since $\eta \ll \nu$, we have $\eta_i(t_i,N^{t_i})=0$ for $\nu_i$-almost all $t_i$. Furthermore, $p_i \ll \nu_i$. To see this, consider $i = 1$ without loss of generality. Note that for $N_1 \in \T_1$ with $\nu_1(N_1)=0$,
   \[
      p_1(N_1) =\int_{T_2} \eta_2(t_2, T_0 \times N_1 \times T_3 \times \cdots \times T_n) p_2(dt_2)=0
   \]
   because $\eta\ll \nu$, and therefore, $\eta_2(t_2, T_0\times N_1 \times T_3 \times \cdots \times T_n) = 0$ for all $t_2 \in T_2$. Thus, $p_1 \ll \nu_1$. Since $\eta$ is consistent under $p$, we get
   \[
      p(N)=\int_{T_i} \eta_i(t_i, N^{t_i})p_i(dt_i)=0.
   \]

   (ii) Fix $i\in \{1,\ldots, n\}$. For each $j=0,\ldots,n$, set $\T_{0,j} \eqdef \T_j$ under case (a), and let $\T_{0,j}$ denote a countably generated $\sigma$-algebra whose universal completion equals $\T_j$ under case (b). In both cases, each $\T_{0,j}$ is countably generated. Let $q_{-i}\eqdef \bigotimes_{j=0,j\ne i}^{n} p_j$. Let $f_i$ be a version of $dp/dq$, taking values in $[0,\infty)$, and let
   \[
      h_i(t_i)\eqdef \int_{T_{-i}} f_i(t) q_{-i}(d t_{-i}).
   \]
   Define $H_i \eqdef \{t_i\in T_i:h_i(t_i)=1\}$ and $\widetilde{\eta}_i: T_i \times \T_{-i} \to [0,1]$ by setting
   \begin{align*}
      \widetilde{\eta}_i(t_i,E) = 1_{H_i}(t_i)\int_{E} f_i(t) q_{-i}(d t_{-i}) + 1_{H_i^c}(t_i)q_{-i}(E), \quad E\in \T_{-i}.
   \end{align*}
   Note that $\widetilde{\eta}_i$ is a probability kernel because for each $E\in \T_{-i}$, the map $t_i\mapsto \widetilde{\eta}_i(t_i, E)$ is $\T_i$-measurable by Lemma \ref{lemm:mble_integral1} in Section \ref{app2} below, and for each $t_i\in T_i$, $E\mapsto\widetilde{\eta}_i(t_i, E)$ is clearly a probability measure. By construction, $\widetilde{\eta}_i(t_i,\csdot)\ll q_{-i}$ for each $t_i\in T_i$. Furthermore, for any $F_i\in\T_i$,
   \begin{align*}
      \int_{F_i} h_i(t_i)p_i(dt_i)&=\int_{F_i}\int_{T_{-i}} f_i(t)q_{-i}(d t_{-i})p_i(dt_i) \\
      &=p(\pi_i^{-1}(F_i)) = p_i(F_i),
   \end{align*}
   implying that $h_i = 1$, $p_i$-a.e. In particular, we have $p_i(H_i)=1$.

   Next, we show that $\widetilde{\eta}_i(t_i,\csdot)$ and $\eta_i(t_i,\csdot)$ agree on $\T_{0,-i} \eqdef \bigotimes_{j=0,j \ne i}^n \T_{0,j}$ for each $t_i$ outside a $p_i$-null set. For any rectangle $F_{-i}=\prod_{j \ne i} F_j \in \T_{-i}$ and any $F_i \in \T_i$, let
   \begin{align*}
      \widetilde{F}_i \eqdef F_0 \times\cdots\times F_{i-1} \times (F_i \cap H_i) \times F_{i+1} \times \cdots \times F_n.
   \end{align*}
   Then, we have
   \begin{align}
      \label{eq:kernels_equality}
      \begin{aligned}
         \int_{F_i}\widetilde{\eta}_i(t_i, F_{-i})p_i(dt_i) & =\int_{F_i \cap H_i} \int_{F_{-i}} f_i(t)q_{-i}(d t_{-i})p_i(dt_i) \\
         & =\int_{\widetilde{F}_i} f_i(t)q(d t) =\int_{\widetilde{F}_i} p(d t) \\
         & =\int_{F_i \cap H_i}\eta_i(t_i, F_{-i}) p_i(d t_i)=\int_{F_i}\eta_i(t_i, F_{-i}) p_i(dt_i),
      \end{aligned}
   \end{align}
   where the first and the last equalities follow because $p_i(H_i^c) = 0$, the third equality follows from the definition of $f_i$, and the fourth equality follows from the consistency of $\eta$ under $p$. Since $\T_{0,j}$, $j=0,\ldots, n$, are countably generated, there exists a countable family of rectangles $\{R_k\}_{k\ge 1}$ in $T_{-i}$ of the form $\prod_{j=0,j\ne i}^n R_{k,j}$ that is closed under finite intersections and generates $\T_{0,-i}$. In light of \eqref{eq:kernels_equality}, for each $k\ge 1$, there exists $N_k\in \T_i$ with $p_i(N_k)=0$ such that for each $t_i\notin N_k$, $\widetilde{\eta}_i(t_i,R_k)=\eta_i(t_i,R_k)$. We conclude by the $\pi$-$\lambda$ theorem that $\widetilde{\eta}_i(t_i,\csdot)$ and $\eta_i(t_i,\csdot)$ agree on $\T_{0,-i}$ outside the $p_i$-null set $\bigcup_{k\ge 1}N_k$. Lemma \ref{lemm:measures_eq} in Section \ref{app2} below extends the agreement from $\T_{0,-i}$ to $\T_{-i}$.
\end{proof}

For the rest of this section, we assume that Assumptions \ref{assump:basic}, \ref{assump:payoff-1}, and \ref{assump:payoff-2} hold. Given a topological space $X$, let $\P(X)$ denote the set of all probability measures on $(X,\B(X))$. Define $T'\eqdef \prod_{j=1}^n T_j$ with the corresponding product $\sigma$-algebra $\T'\eqdef \bigotimes_{j=1}^n \T_j$. Similarly, for $i=1,\ldots,n$, let $T'_{-i}\eqdef \prod_{j=1,j\ne i}^n T_j$ such that $T_{-i}=T_0\times T'_{-i}$.

We endow each $\P(A_i)$, $i=1,\ldots,n$, with the topology of weak convergence. Given $\gamma_{-i}\in \RC_{-i}\eqdef \prod_{j\ne i}\RC_j$, we let $\kappa_i({\gamma}_{-i}): T'_{-i} \times \B(A_{-i}) \rightarrow [0,1]$ denote the tensor product of the elements of $\gamma_{-i}$, i.e., for any Borel rectangle $B= \prod_{j \ne i} B_j$, with $B_j\in \B(A_j)$,
\[
   \kappa_i({\gamma}_{-i})(t'_{-i}, B) = \prod_{j \ne i }\gamma_{j}(t_j,B_j).
\]
For each $i=1,\ldots,n$, define the probability kernel $\eta_i \otimes \kappa_i({\gamma}_{-i}):T_i \times (\T_{-i}\otimes \B(A_{-i}))\to [0,1]$ by
\[
   (\eta_i \otimes \kappa_i(\gamma_{-i}))(t_i,F\times B)=\int_F \kappa_i({\gamma}_{-i})(t'_{-i}, B)\,
   \eta_i(t_i, d (t_0,t'_{-i}))
\]
for any $F \in \T_{-i}$ and $B \in \B(A_{-i})$, and the map $g_i:T_i \times A_i\times \RC_{-i}\to \R$, given by
\[
   g_i(t_i,a_i,\gamma_{-i}) = \int u_i(a,t) (\eta_i \otimes \kappa_i({\gamma}_{-i}))(t_i, d(t_{-i},a_{-i})).
\]

The following lemmas provide the foundation for our main existence result, showing that any Nash equilibrium of a surrogate game can be modified into a Bayesian equilibrium of the original game.

\begin{lemma}
   \label{lemm:upper_semicontinuity}
   \begin{enumerate}[wide=0em,label=(\roman*)]
      \item For each $i=1,\ldots,n$, $t_i \in T_i$, and $\gamma_{-i} \in \RC_{-i}$, the map $a_i\mapsto g_i(t_i,a_i,{\gamma}_{-i})$ is bounded and upper semicontinuous on $A_i$.
      \item For each $i=1,\ldots, n$ and $\gamma_{-i} \in \RC_{-i}$, the map $(t_i,a_i)\mapsto g_i(t_i,a_i,\gamma_{-i})$ is $\T_i\otimes \B(A_i)$-measurable.
   \end{enumerate}
\end{lemma}

\begin{proof}
   To show upper semicontinuity in (i), fix $(t_i,a_i,\gamma_{-i}) \in T_i \times A_i\times \RC_{-i}$ and take a sequence $a_{i,n} \to a_i$ in $A_i$. Then, using the reverse Fatou lemma and Assumptions \ref{assump:payoff-1} and \ref{assump:payoff-2}, we find that
   \begin{align*}
      \limsup_{n \to \infty} g_i(t_i,a_{i,n},\gamma_{-i}) & \le \int \limsup_{n \to \infty} u_i(a_{i,n},a_{-i},t) (\eta_i \otimes \kappa_i({\gamma}_{-i}))(t_i, d(t_{-i},a_{-i})) \\
      & \le \int u_i(a,t) (\eta_i \otimes \kappa_i({\gamma}_{-i}))(t_i, d(t_{-i},a_{-i})) = g_i(t_i,a_i,{\gamma}_{-i}).
   \end{align*}
   The boundedness in (i) follows from Assumption \ref{assump:payoff-1}. Finally, (ii) follows from Lemma \ref{lemm:mble_integral1} in Section \ref{app2} below, where, by Assumption \ref{assump:payoff-1}, $D=T_i \times A_i$.
\end{proof}

For $r_i \in \P(A_i)$, we extend the notation $U_i(\gamma_i,\gamma_{-i};t_i)$ by writing $U_i(r_i, \gamma_{-i}; t_i)$ for the expected payoff when player $i$ of type $t_i$ plays the randomized action $r_i$.

\begin{lemma}
   \label{lemm:measurable_maximizer}
   For each $i=1,\ldots, n$ and $\gamma_{-i} \in \RC_{-i}$, the map $t_i\mapsto \sup_{r_i \in \P(A_i)} U_i(r_i,\gamma_{-i};t_i)$ is $\T_i$-measurable, and there exists a $(\T_i,\B(\P(A_i)))$-measurable map $\beta_i(\csdot;\gamma_{-i}): T_i \to \P(A_i)$ such that for each $t_i \in T_i$,
   \begin{align*}
      U_i(\beta_i(t_i;\gamma_{-i}),\gamma_{-i};t_i) = \sup_{r_i \in \P(A_i)} U_i(r_i,\gamma_{-i};t_i).
   \end{align*}
\end{lemma}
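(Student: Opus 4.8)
The plan is to recognize this as a measurable maximum problem, to pass from pure to mixed actions, and then to invoke a measurable projection theorem for the value function together with a measurable selection theorem for the maximizer. The key preparatory observation is that, by Fubini's theorem (legitimate because of the bound in Assumption \ref{assump:payoff-1}), the payoff from a randomized action factors through $g_i$: for every $t_i \in T_i$ and $r_i \in \P(A_i)$,
\[
   U_i(r_i,\gamma_{-i};t_i) = \int_{A_i} g_i(t_i,a_i,\gamma_{-i})\, r_i(da_i) \eqdef h(t_i,r_i).
\]
Write $v(t_i) \eqdef \sup_{r_i \in \P(A_i)} h(t_i,r_i)$. Since $A_i$ is compact metrizable by Assumption \ref{assump:basic}, $\P(A_i)$ equipped with the topology of weak convergence is itself compact metrizable, hence Polish; this is the space over which we optimize.

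First I would record two properties of $h$. For fixed $t_i$, Lemma \ref{lemm:upper_semicontinuity2}(i) gives that $a_i \mapsto g_i(t_i,a_i,\gamma_{-i})$ is bounded and upper semicontinuous, so by the portmanteau theorem $r_i \mapsto h(t_i,r_i)$ is upper semicontinuous on the compact space $\P(A_i)$; consequently the supremum defining $v(t_i)$ is attained and the argmax set is nonempty and compact. For joint measurability I would start from Lemma \ref{lemm:upper_semicontinuity2}(ii): the truncations $g_i^{(k)} \eqdef (-k)\vee(g_i \wedge k)$ are bounded and $\T_i\otimes\B(A_i)$-measurable, and since $r_i \mapsto r_i(C)$ is $\B(\P(A_i))$-measurable for each $C \in \B(A_i)$, a functional monotone class argument (it holds for indicators of rectangles $B \times C$, where the integral equals $\mathbf 1_B(t_i)\,r_i(C)$) shows $(t_i,r_i)\mapsto \int g_i^{(k)}(t_i,a_i)\,r_i(da_i)$ is $\T_i\otimes\B(\P(A_i))$-measurable. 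Letting $k\to\infty$ and applying dominated convergence with the $r_i$-integrable envelope $\overline u_i(t_i)$ of Assumption \ref{assump:payoff-1}, these converge pointwise to $h$, so $h$ is $\T_i\otimes\B(\P(A_i))$-measurable.

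With these two properties in hand the conclusion follows from projection and selection. For the value function, $\{t_i : v(t_i) > c\}$ is exactly the projection onto $T_i$ of $\{(t_i,r_i) : h(t_i,r_i) > c\} \in \T_i\otimes\B(\P(A_i))$; since $\P(A_i)$ is Polish and $\T_i$ is universally complete by Assumption \ref{assump:basic}, the measurable projection theorem places this projection in $\T_i$, so $v$ is $\T_i$-measurable. For the maximizer, the graph of the argmax correspondence $t_i \mapsto \{r_i : h(t_i,r_i) \ge v(t_i)\}$ lies in $\T_i\otimes\B(\P(A_i))$ (because $h$ is jointly measurable and $v$ is $\T_i$-measurable) and has nonempty values by the attainment noted above; the Aumann--von Neumann measurable selection theorem then yields a $(\T_i,\B(\P(A_i)))$-measurable map $\beta_i(\csdot;\gamma_{-i})$ with $h(t_i,\beta_i(t_i;\gamma_{-i})) = v(t_i)$ for every $t_i$, which is precisely the asserted maximizer.

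The main obstacle is that $h$ is only upper semicontinuous---not continuous---in the action variable $r_i$, so the classical Carath\'eodory measurable maximum theorem does not apply directly; this is exactly why I route the argument through the measurable projection theorem and a measurable-graph selection theorem, both of which lean on the universal completeness of $\T_i$. A secondary technical point is that $g_i$ is not globally bounded but only dominated by the type-dependent envelope $\overline u_i(t_i)$, which is handled by the truncation-and-dominated-convergence step rather than by a single uniform bound.
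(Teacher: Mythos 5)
Your proof is correct, and it reaches the conclusion by a genuinely different route from the paper's at the selection step. The paper obtains joint measurability of $(t_i,r_i)\mapsto U_i(r_i,\gamma_{-i};t_i)$ by citing Lemma \ref{lemm:mble_integral2} (which is itself proved by a monotone-class argument via Lemma \ref{lemm:mble_integral1}, so your truncation-plus-monotone-class derivation is substantively the same computation, just inlined), and then gets both the measurability of the value function and the measurable maximizer in one stroke from Theorem 3.5 of \cite{Rieder:78:ManuscriptaMath}, a packaged measurable-selection theorem for semicontinuous models whose Example 2.3 covers precisely the universally complete $\sigma$-algebras of Assumption \ref{assump:basic}. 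You instead unwind that package into its two classical ingredients: the measurable projection theorem, applied to $\{t_i: v(t_i)>c\}=\mathrm{proj}_{T_i}\{(t_i,r_i):h(t_i,r_i)>c\}$, and the von Neumann--Aumann selection theorem, applied to the graph of the argmax correspondence, with upper semicontinuity on the compact space $\P(A_i)$ (compactness via Theorem 15.11 and semicontinuity of $r_i\mapsto\int g_i\,dr_i$ via Theorem 15.5 of \cite{Aliprantis/Border:06:InfiniteDimensionalAnalysis}, your portmanteau remark) used only to guarantee that every argmax section is nonempty --- which correctly delivers the maximizer for \emph{every} $t_i$ with no exceptional null set, exactly what Lemma \ref{lemm:BE} later needs. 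Both arguments invoke universal completeness of $\T_i$ at the same juncture, since projections and selections over a Polish fiber are in general only universally measurable. What your decomposition buys is transparency about where completeness enters and a slightly more general selection step (a measurable graph with nonempty sections suffices; semicontinuity matters only for attainment of the supremum); what the paper's route buys is brevity, since Rieder's theorem yields the measurable value and the measurable selector simultaneously.
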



\begin{proof}
   Fix $\gamma_{-i} \in \RC_{-i}$ and define the map $\psi_i: T_i \times \P(A_i) \rightarrow \R$ by setting
   \begin{align*}
      \psi_i(t_i,r_i)=\int g_i(t_i,a_i,\gamma_{-i}) r_i(da_i) = U_i(r_i,\gamma_{-i};t_i).
   \end{align*}
   By Lemma \ref{lemm:mble_integral2} in Section \ref{app2} below and Lemma \ref{lemm:upper_semicontinuity}(ii), the map $\psi_i$ is $\T_i \otimes \B(\P(A_i))$-measurable. Next, $\P(A_i)$ is compact by Theorem 15.11 of \cite{Aliprantis/Border:06:InfiniteDimensionalAnalysis}, and for a fixed $t_i\in T_i$, the map $r_i\mapsto\psi_i(t_i,r_i)$ is upper semicontinuous by Theorem 15.5 of \cite{Aliprantis/Border:06:InfiniteDimensionalAnalysis} and Lemma \ref{lemm:upper_semicontinuity}(i). Thus, by Theorem 3.5 and Corollary 3.7 of \cite{Rieder:78:ManuscriptaMath}, the map $t_i\mapsto \sup_{r_i \in \P(A_i)} \psi_i(t_i,r_i)$ is $\T_i$-measurable, and there exists a $(\T_i,\B(\P(A_i)))$-measurable maximizer $r_i^*:T_i\to \P(A_i)$.\footnote{
      Under Assumption \ref{assump:basic}(a), this follows from Example 2.6 in \cite{Rieder:78:ManuscriptaMath}: the upper level sets $\{r_i \in \P(A_i) : \psi_i(t_i,r_i) \ge c\}$, $t_i\in T_i$, $c\in\R$, are compact by the upper semicontinuity of $r_i\mapsto \psi_i(t_i,r_i)$ and compactness of $\P(A_i)$. Under Assumption \ref{assump:basic}(b), it follows from Example 2.3 therein.
   }
   We set $\beta_i(\csdot;\gamma_{-i}) = r_i^*$.
\end{proof}

Consider $\nu\in \M$ and let $\widetilde{\gamma}^*\in \NE_\nu$. For each $i=1,\ldots,n$, define
\[
   \widetilde{T}_i \eqdef \left\{ t_i \in T_i : U_i (\widetilde{\gamma}_i^*,\widetilde{\gamma}_{-i}^*;t_i) = \sup_{r_i \in \P(A_i)} U_i (r_i,\widetilde{\gamma}_{-i}^*;t_i) \right\},
\]
and let $\beta_i(\csdot;\widetilde{\gamma}_{-i}^*): T_i \to \P(A_i)$ be as in Lemma \ref{lemm:measurable_maximizer}. Note that by Lemma \ref{lemm:measurable_maximizer}, $\widetilde{T}_i\in \T_i$. Next, for $i=1,\ldots, n$, $t_i \in T_i$, and $A'_i \in \B(A_i)$, we set
\begin{align}
   \label{eq:beq}
   \gamma_i^*(t_i,A'_i) \eqdef \widetilde{\gamma}_i^*(t_i,A'_i) \cdot 1\{t_i \in \widetilde{T}_i\} + \beta_i(t_i;\widetilde{\gamma}_{-i}^*)(A'_i) \cdot 1\{t_i \notin \widetilde{T}_i\}.
\end{align}
Since the map $t_i\mapsto\gamma_i^*(t_i, A_i')$ is $\T_i$-measurable \citep[see, e.g.,][Lemma 15.16]{Aliprantis/Border:06:InfiniteDimensionalAnalysis}, $\gamma_i^*\in \RC_i$. We show that the modified strategy is a $\nu_i$-version of $\widetilde{\gamma}_i^*$, and that $(\gamma_1^*,\ldots,\gamma_n^*)$ is a Bayesian equilibrium.


\begin{lemma}
   \label{lemm:version}
   For $i = 1,\ldots,n$, $\gamma_i^*$ is a $\nu_i$-version of $\widetilde{\gamma}_i^*$, and for each $t_i \in T_i$ and $r_i \in \P(A_i)$,
   \[
      U_i(r_i,\gamma_{-i}^*;t_i)=U_i(r_i, \widetilde{\gamma}_{-i}^*;t_i).
   \]
\end{lemma}

\begin{proof}
   Suppose that $\nu_i(\widetilde{T}_i^c) > 0.$ By Lemma \ref{lemm:measurable_maximizer}, there exists a measurable map $\beta_i(\csdot;\widetilde{\gamma}_{-i}^*): T_i \to \P(A_i)$ such that
   \begin{align*}
      \sup_{r_i \in \P(A_i)} U_i (r_i,\widetilde{\gamma}_{-i}^*;t_i) = U_i (\beta_i(t_i;\widetilde{\gamma}_{-i}^*),\widetilde{\gamma}_{-i}^*;t_i).
   \end{align*}
   Since $U_i (\beta_i(t_i;\widetilde{\gamma}_{-i}^*),\widetilde{\gamma}_{-i}^*;t_i) > U_i(\widetilde{\gamma}_i^*,\widetilde{\gamma}_{-i}^*;t_i)$ on $\widetilde{T}_i^c$, we find that
   \[
      \widetilde{U}_i(\beta_i(\csdot;\widetilde{\gamma}_{-i}^*),\widetilde{\gamma}_{-i}^*;\nu) > \widetilde{U}_i(\widetilde{\gamma}_i^*, \widetilde{\gamma}_{-i}^*;\nu),
   \]
   which violates the fact that $\widetilde{\gamma}^*\in \NE_\nu$. Hence, we must have $\nu_i(\widetilde{T}_i^c) = 0$.

   As for the second assertion, let $H_i \eqdef \bigcup_{j \ne i} \pi_{ij}^{-1}(\widetilde{T}_j^c)$, where $\pi_{ij}:T'_{-i} \rightarrow T_j$, $j\ne i$, denotes the coordinate projection. Since $\nu_j(\widetilde{T}_j^c) = 0$ for all $j$, we have $\nu_{-i}(T_0 \times H_i) = 0$ and, consequently, $\eta_i(t_i,T_0 \times H_i) = 0$ for all $t_i \in T_i$. Finally, since $\widetilde{\gamma}_j^*(t_j,\csdot) = \gamma_j^*(t_j,\csdot)$ on $\widetilde{T}_j$ for all $j$,
   \begin{align*}
      U_i(r_i, \widetilde{\gamma}_{-i}^*;t_i) &= \int g_i(t_i,a_i, \widetilde{\gamma}_{-i}^*) r_i(da_i) \\
      &= \int g_i(t_i,a_i, \gamma_{-i}^*) r_i(da_i) = U_i(r_i, \gamma_{-i}^*;t_i),
   \end{align*}
   as required.
\end{proof}

\begin{lemma}
   \label{lemm:BE}
   The profile of behavioral strategies $(\gamma_1^*,\ldots, \gamma_n^*)$ is a Bayesian equilibrium.
\end{lemma}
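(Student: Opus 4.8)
The plan is to show that for every player $i$, every type $t_i \in T_i$, and every deviation $\gamma_i \in \RC_i$, the strategy $\gamma_i^*$ is a best response, i.e. $U_i(\gamma_i^*,\gamma_{-i}^*;t_i) \ge U_i(\gamma_i,\gamma_{-i}^*;t_i)$. The argument rests on three facts already in hand. First, $U_i(\gamma_i,\gamma_{-i}^*;t_i)$ depends on the deviation $\gamma_i$ only through the randomized action $\gamma_i(t_i,\csdot) \in \P(A_i)$ it prescribes at the fixed type $t_i$, so it suffices to compare $\gamma_i^*$ against the pointwise best response over $\P(A_i)$. Second, Lemma \ref{lemm:version} gives, for every $t_i$ and every $r_i \in \P(A_i)$, the identity $U_i(r_i,\gamma_{-i}^*;t_i) = U_i(r_i,\widetilde\gamma_{-i}^*;t_i)$, so deviations against $\gamma_{-i}^*$ can be evaluated against $\widetilde\gamma_{-i}^*$ instead. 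Third, by the construction in \eqref{eq:beq}, the action $\gamma_i^*(t_i,\csdot)$ attains $\sup_{r_i \in \P(A_i)} U_i(r_i,\widetilde\gamma_{-i}^*;t_i)$ at \emph{every} type $t_i$.

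Concretely, I would fix $i$, $t_i$, and $\gamma_i \in \RC_i$, and first bound the deviation payoff using the first two facts:
\[
   U_i(\gamma_i,\gamma_{-i}^*;t_i) = U_i(\gamma_i(t_i,\csdot),\widetilde\gamma_{-i}^*;t_i) \le \sup_{r_i \in \P(A_i)} U_i(r_i,\widetilde\gamma_{-i}^*;t_i).
\]
Next I would verify the third fact by splitting on whether $t_i \in \widetilde T_i$: if $t_i \in \widetilde T_i$, then $\gamma_i^*(t_i,\csdot) = \widetilde\gamma_i^*(t_i,\csdot)$ and the defining property of $\widetilde T_i$ yields $U_i(\widetilde\gamma_i^*,\widetilde\gamma_{-i}^*;t_i) = \sup_{r_i \in \P(A_i)} U_i(r_i,\widetilde\gamma_{-i}^*;t_i)$; if $t_i \notin \widetilde T_i$, then $\gamma_i^*(t_i,\csdot) = \beta_i(t_i;\widetilde\gamma_{-i}^*)(\csdot)$ and Lemma \ref{lemm:measurable_maximizer} delivers the same supremum. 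Applying Lemma \ref{lemm:version} once more to translate back to $\gamma_{-i}^*$ then gives
\[
   U_i(\gamma_i^*,\gamma_{-i}^*;t_i) = U_i(\gamma_i^*(t_i,\csdot),\widetilde\gamma_{-i}^*;t_i) = \sup_{r_i \in \P(A_i)} U_i(r_i,\widetilde\gamma_{-i}^*;t_i),
\]
and chaining the two displays produces the required best-response inequality.

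The substance of the lemma lies not in any single estimate but in the fact that the modification \eqref{eq:beq} upgrades a $\nu_i$-almost-everywhere optimality into optimality at \emph{every} type. The crux, and the point to handle with care, is that Lemma \ref{lemm:version} holds for every $t_i$ rather than merely $\nu_i$-a.e.; this is precisely what certifies the best-response inequality on the $\nu_i$-null set $\widetilde T_i^c$, where $\widetilde\gamma_i^*$ itself may fail to be optimal but $\beta_i$ has been substituted. Everything else is bookkeeping that follows directly from the two preceding lemmas.
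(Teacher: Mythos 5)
Your proposal is correct and follows essentially the same route as the paper: the same case split on $t_i \in \widetilde T_i$ versus $t_i \notin \widetilde T_i$, the same use of Lemma \ref{lemm:version} to pass between $\gamma_{-i}^*$ and $\widetilde\gamma_{-i}^*$ at every type, and the same appeal to Lemma \ref{lemm:measurable_maximizer} (via the definition in \eqref{eq:beq}) on $\widetilde T_i^c$. Your only addition is making explicit the observation that $U_i(\gamma_i,\gamma_{-i}^*;t_i)$ depends on the deviation $\gamma_i$ only through $\gamma_i(t_i,\csdot) \in \P(A_i)$, which the paper leaves implicit when reducing the equilibrium condition to the pointwise supremum over $\P(A_i)$.
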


\begin{proof}
   As argued above, $\gamma^*\in \RC$. Thus,
   it suffices to show that for each $i=1,\ldots,n$ and $t_i\in T_i$,
   \[
      U_i(\gamma_i^*,\gamma_{-i}^*;t_i)= \sup_{r_i \in \P(A_i)} U_i(r_i,\gamma_{-i}^*;t_i).
   \]
   Take $i\in\{1,\ldots,n\}$. For $t_i \in \widetilde{T}_i$, we have
   \begin{align*}
      U_i(\gamma_i^*,\gamma_{-i}^*;t_i) & = U_i(\gamma_i^*,\widetilde{\gamma}_{-i}^*;t_i) = U_i(\widetilde{\gamma}_i^*,\widetilde{\gamma}_{-i}^*;t_i) \\
      & = \sup_{r_i \in \P(A_i)} U_i(r_i,\widetilde{\gamma}_{-i}^*;t_i) = \sup_{r_i \in \P(A_i)} U_i(r_i,\gamma_{-i}^*;t_i),
   \end{align*}
   where the first and last equalities follow from Lemma \ref{lemm:version}, while the second and third equalities follow the definitions of $\gamma_i^*$ and $\widetilde{T}_i$, respectively. Similarly, for $t_i \notin \widetilde{T}_i$, we have
   \begin{align*}
      U_i(\gamma_i^*,\gamma_{-i}^*;t_i) & = U_i(\beta_i(t_i;\widetilde{\gamma}_{-i}^*),\gamma_{-i}^*;t_i) = U_i(\beta_i(t_i;\widetilde{\gamma}_{-i}^*),\widetilde{\gamma}_{-i}^*;t_i) \\
      &= \sup_{r_i \in \P(A_i)} U_i(r_i,\widetilde{\gamma}_{-i}^*;t_i) = \sup_{r_i \in \P(A_i)} U_i(r_i,\gamma_{-i}^*;t_i),
   \end{align*}
   where the second and fourth equalities follow from Lemma \ref{lemm:version}, and the third equality follows from Lemma \ref{lemm:measurable_maximizer}.
\end{proof}

\begin{proof}[Proof of Lemma \ref{lemm:BE_main}]
   Let $\gamma_i^*$ be the modification of $\widetilde{\gamma}_i^*$ defined in \eqref{eq:beq} for $i=1,\ldots,n$. The result then follows from Lemmas \ref{lemm:version} and \ref{lemm:BE}.
\end{proof}

\begin{proof}[Proof of Corollary \ref{cor:BE_existence_1}]
   Choose $\nu \in \M$. By Theorem \ref{thm:existence}, it suffices to show that $\NE_\nu \ne \varnothing$. We construct a modified surrogate Bayesian game $\widetilde{G}^{**}(\nu)$ with a common prior as follows. For each player $i=1,\ldots,n$, the payoff function $u_i^*: A \times T' \rightarrow \R$ is given by
   \begin{align*}
      u_i^*(a,t') = \int \widetilde{u}_i\left(a,(t_0,t');\nu\right)\nu_0(dt_0),
   \end{align*}
   where, using Lemma \ref{lemm:density_version} in Section \ref{app2} below, we choose a version $f_{\nu,i}$ of $d(\nu_i \otimes \eta_i)/d (\nu_i \otimes \nu_{-i})$, taking values in $[0,\infty)$, such that $\int f_{\nu,i}(t_i,t_{-i})\nu_0(dt_0)\in [0,\infty)$ for each $t\in T$.

   We treat $t_i \in T_i$ as private information of player $i$, which is drawn from $\nu_i$ independently across the players, and use $\nu'\eqdef \bigotimes_{j=1}^n \nu_j$ as a common prior. Then, $\NE_\nu \ne \varnothing$ if and only if the game $\widetilde{G}^{**}(\nu)$ has a Bayesian Nash equilibrium in behavioral strategies. To show the latter, we use Theorem 3.1 of \cite{Balder:88:MOR}. For this, we check the conditions $C_1$--$C_3$ of the theorem, endowing $\RC = \RC_1 \times \cdots \times \RC_n$ with the product topology of the narrow topologies.\footnote{
      Note that the topology over the set of probability kernels constructed and called a ``weak topology'' by \cite{Balder:88:MOR} is also called a narrow topology in the literature \citep[see][Ch.~3]{Crauel:02:RandomProbabilityMeasures}. \cite{Balder:88:MOR} does not require any topology on $T_i$. Hence, the narrow topology of his paper carries over to our setting.
   }

   We prove $C_1$ first. Note that the payoff function $u_i^*$ is $(\B(A)\otimes \T')$-measurable by Lemma \ref{lemm:mble_integral1} in Section \ref{app2} below, where we treat $\nu_0$ as a constant probability kernel; thus, the condition $C_1'$ is satisfied. For $C_1''$, the continuity of $u_i^*$ on $A$ for each fixed $t'\in T'$ follows by the dominated convergence theorem with $\overline{u}_i(t_i) f_{\nu,i}(t_i,t_{-i})$ as the dominating function. The condition $C_1'''$ holds with $\varphi_i:T'\to [0,\infty)$ given by
   \[
      \varphi_i(t')=\overline{u}_i(t_i)\int f_{\nu,i}(t_i,t_{-i})\nu_0(dt_0).
   \]
   Indeed, $|u_i^*| \le \varphi_i$ on $A\times T'$, and since $\nu\in\M$, we have
   \begin{align*}
      \int_{T'} \varphi_i(t')\nu'(dt')&=\int_{T'} \overline{u}_i(t_i)\left(\int_{T_0}f_{\nu,i}(t_i,t_{-i})\nu_0(dt_0)\right)\nu'(dt') \\
      &=\int_{T_i} \overline{u}_i(t_i)\left(\int_{T_{-i}}f_{\nu,i}(t_i,t_{-i})\nu_{-i}(d t_{-i})\right)\nu_i(dt_i) \\
      &=\int_{T_i}\overline{u}_i(t_i)\eta_i(t_i,T_{-i})\nu_i(dt_i)=\int_{T_i} \overline{u}_i(t_i)\nu_i(dt_i)<\infty.
   \end{align*}
   Finally, the conditions $C_2$ and $C_3$ hold by construction and Assumption \ref{assump:basic}.
\end{proof}

\section{Auxiliary Results}
\label{app2}

Given a $\sigma$-algebra $\X$, $\overline{\X}$ denotes its universal completion. We let $\overline{\R}\eqdef \R\cup \{-\infty,\infty\}$.

\begin{lemma}
   \label{lemm:measures_eq}
   Let $(X_1,\X_1),\ldots, (X_m,\X_m)$ be measurable spaces and let $X\eqdef \prod_{j=1}^m X_j$, $\X\eqdef \bigotimes_{j=1}^m \X_j$, and $\widetilde{\X}\eqdef \bigotimes_{j=1}^m \overline{\X}_j$. Suppose that $\mu$ and $\mu'$ are probability measures on $(X,\X')$, where $\X\subset\X'\subset \widetilde{\X}$. Then, $\mu=\mu'$ if and only if $\mu(B)=\mu'(B)$ for each $B\in \X$.
\end{lemma}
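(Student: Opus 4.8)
The plan is to prove the nontrivial (``if'') direction, since the ``only if'' direction is immediate from $\X \subseteq \widetilde{\X}$. The strategy is to reduce the claim to measurable rectangles and then exploit universal measurability one coordinate at a time. First I would note that the rectangles $R = \prod_{j=1}^m A_j$ with $A_j \in \overline{\X}_j$ form a $\pi$-system that generates $\widetilde{\X}$ and contains $X$; hence, since $\mu$ and $\mu'$ are probability measures, it suffices by the uniqueness theorem for measures (Dynkin's $\pi$-$\lambda$ theorem) to verify $\mu(R) = \mu'(R)$ for every such rectangle.

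Next I would record that equality on $\X$ forces the coordinate marginals to coincide: writing $\pi_j: X \to X_j$ for the coordinate projection and $\bar\mu_j(\csdot) \eqdef \mu(\pi_j^{-1}(\csdot))$, each slab $\pi_j^{-1}(B) = X_1 \times \cdots \times B \times \cdots \times X_m$ with $B \in \X_j$ lies in $\X$, so $\mu$ and $\mu'$ have the same marginal $\bar\mu_j$ on $(X_j,\X_j)$. The crucial step is then the sandwiching. Fix a rectangle $R = \prod_j A_j$. Since $\overline{\X}_j = \bigcap_{\lambda \in \P(X_j)} \X_j^\lambda \subseteq \X_j^{\bar\mu_j}$, for each $j$ there exist $B_j, C_j \in \X_j$ with $B_j \subseteq A_j \subseteq C_j$ and $\bar\mu_j(C_j \setminus B_j) = 0$. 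Setting $R_B = \prod_j B_j$ and $R_C = \prod_j C_j$, both in $\X$, gives $R_B \subseteq R \subseteq R_C$. The elementary inclusion $R_C \setminus R_B \subseteq \bigcup_{j=1}^m \pi_j^{-1}(C_j \setminus B_j)$ then yields $\mu(R_C \setminus R_B) \le \sum_j \bar\mu_j(C_j \setminus B_j) = 0$, and the same for $\mu'$. Therefore $\mu(R) = \mu(R_B) = \mu'(R_B) = \mu'(R)$, the middle equality using $R_B \in \X$ and the hypothesis, which completes the reduction.

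I expect the only real obstacle to be coordinating the inclusion $\overline{\X}_j \subseteq \X_j^{\bar\mu_j}$ with the ``product'' null-set bookkeeping. One must pin down the common marginal $\bar\mu_j$ \emph{before} invoking the completion, because the sandwiching sets $B_j, C_j$ must be null for a measure shared by both $\mu$ and $\mu'$; and one must verify the containment $R_C \setminus R_B \subseteq \bigcup_j \pi_j^{-1}(C_j \setminus B_j)$, which is what reduces the joint null-set estimate on $R_C\setminus R_B$ to the coordinate marginals. Everything else is routine application of the $\pi$-$\lambda$ theorem.
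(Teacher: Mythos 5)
Your proof is correct, but it takes a more self-contained route than the paper's. The paper's proof handles an arbitrary $C \in \widetilde{\X}$ directly: it invokes the inclusion $\widetilde{\X} \subset \X^{\mu}$ (that the product of universal completions sits inside the $\mu$-completion of the product $\sigma$-algebra) as a known fact, sandwiches $B \subset C \subset B \cup N$ with $B, N \in \X$ and $\mu(N) = 0$, and transfers nullity of $N$ to $\mu'$ using agreement on $\X$. You instead avoid citing that inclusion: the Dynkin $\pi$-$\lambda$ reduction lets you work only with rectangles $\prod_j A_j$, $A_j \in \overline{\X}_j$, where you can produce the sandwich coordinatewise from $\overline{\X}_j \subseteq \X_j^{\bar\mu_j}$ and glue via the union bound $R_C \setminus R_B \subseteq \bigcup_j \pi_j^{-1}(C_j \setminus B_j)$. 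In effect you reprove the inclusion $\widetilde{\X} \subset \X^{\mu}$ in exactly the case needed, and your observation that the marginal $\bar\mu_j$ must be fixed as the \emph{common} marginal before invoking completion is precisely the point that makes the null sets shared between $\mu$ and $\mu'$ (the paper achieves the same transfer at the level of the single set $N \in \X$). The paper's argument is shorter but leans on an unproved standard fact; yours is longer but elementary and makes explicit where the hypothesis of agreement on $\X$ enters (twice: identifying marginals, and evaluating $\mu(R_B) = \mu'(R_B)$). Both are complete and valid.
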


\begin{proof}
   The necessity is immediate. For sufficiency, suppose that $\mu$ and $\mu'$ agree on $\X$. Consider any $C\in \X'$. Since $\X' \subset \X^{\mu}$, the $\mu$-completion of $\X$, we have $C\in \X^{\mu}$ and there exist $B,N\in \X$ such that $B\subset C\subset B\cup N$ and $\mu(N)=0$. Then, $\mu'(N)=0$, and consequently,
   \[
      \mu(C)=\mu(B)=\mu'(B)=\mu'(C),
   \]
   as required.
\end{proof}

\begin{lemma}
   \label{lemm:density_version}
   Let $(X,\X,\mu)$ and $(Y,\Y,\nu)$ be probability spaces and let $\lambda$ be a probability measure on $(X\times Y,\X\otimes \Y)$ such that $\lambda\ll \mu\otimes \nu$. Then, there exists a version $f$ of $d\lambda/d(\mu \otimes \nu)$ such that $f$ takes values in $[0,\infty)$, and
   \[
      h(x)\eqdef \int_Y f(x,y)\nu(dy)\in [0,\infty)
   \]
   for each $x\in X$.
\end{lemma}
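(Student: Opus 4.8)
The plan is to begin with an arbitrary nonnegative version of the Radon--Nikodym derivative and then repair it on a single $\mu$-null set so that the partial integral over $Y$ becomes finite at \emph{every} point of $X$, not merely $\mu$-almost everywhere. The only substantive tool needed is Tonelli's theorem, which simultaneously furnishes the measurability of the partial integral and, via the integrability of $\lambda$, its finiteness outside a null set.

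First I would fix any version $f_0$ of $d\lambda/d(\mu\otimes\nu)$ and replace it with $\max\{f_0,0\}$, which is $\X\otimes\Y$-measurable, takes values in $[0,\infty)$, and agrees with $f_0$ outside the $(\mu\otimes\nu)$-null set $\{f_0<0\}$; hence it remains a version. Writing $f_0$ for this nonnegative choice, Tonelli's theorem guarantees that the map
\[
   h_0(x)\eqdef \int_Y f_0(x,y)\,\nu(dy)
\]
is $\X$-measurable as a map into $[0,\infty]$ and satisfies $\int_X h_0\,d\mu = \int_{X\times Y} f_0\,d(\mu\otimes\nu) = \lambda(X\times Y)=1<\infty$. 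Consequently the set $N\eqdef\{x\in X: h_0(x)=\infty\}$ belongs to $\X$ and has $\mu(N)=0$.

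Next I would set $f(x,y)\eqdef f_0(x,y)\,1\{x\notin N\}$. Because $N\times Y$ is a $(\mu\otimes\nu)$-null set, $f$ coincides with $f_0$ outside a null set and is therefore still a version of $d\lambda/d(\mu\otimes\nu)$; it is $\X\otimes\Y$-measurable since $N\in\X$, and it takes values in $[0,\infty)$. Finally, for every $x\in X$ the quantity $h(x)=\int_Y f(x,y)\,\nu(dy)$ equals $h_0(x)<\infty$ when $x\notin N$ and equals $0$ when $x\in N$, so $h(x)\in[0,\infty)$ everywhere, which is the assertion. There is no serious obstacle here: the single idea is that $\mu$-almost-everywhere finiteness of the partial integral can be upgraded to finiteness everywhere by zeroing out the derivative on the exceptional null set, a modification that is harmless for a version of a Radon--Nikodym derivative; the only point requiring care is invoking Tonelli on the nonnegative $f_0$ to obtain both measurability of $h_0$ and $\mu(N)=0$.
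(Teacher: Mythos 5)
Your proposal is correct and follows essentially the same route as the paper's proof: take a nonnegative version of $d\lambda/d(\mu\otimes\nu)$, use Fubini--Tonelli to get $\int h\,d\mu = \lambda(X\times Y)=1$ and hence $\mu$-a.e.\ finiteness of the partial integral, then zero out the density on the null set $\{h=\infty\}\times Y$ to obtain finiteness everywhere. Your version merely spells out two details the paper leaves implicit (obtaining nonnegativity via $\max\{f_0,0\}$ and the measurability of $h_0$), so there is nothing to add.
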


\begin{proof}
   First, choose a version $f$ of $d\lambda/d(\mu\otimes\nu)$ such that $f$ takes values in $[0,\infty)$, making $h\ge 0$. Applying the Fubini--Tonelli theorem, we get
   \[
      \int h d\mu=\int f d(\mu \otimes \nu)=\lambda(X\times Y)=1.
   \]
   Therefore, $h<\infty$, $\mu$-a.e. Let $N\eqdef \{h=\infty\}\times Y$. Since $(\mu \otimes \nu)(N)=0$, we can set $f=0$ on $N$. With the redefined version of $f$, $h(x)\in [0,\infty)$ for each $x\in X$.
\end{proof}

\begin{lemma}
   \label{lemm:mble_integral1}
   Let $(X,\X)$ and $(Y,\Y)$ be measurable spaces and let $\rho:X\times \Y\to[0,1]$ be a probability kernel. Suppose further that $f:X\times Y\to \R$ is $(\X\otimes \Y)$-measurable. Define the map $g:D\to \overline{\R}$ by
   \[
      g(x)= \int_{Y} f(x,y)\rho(x,dy),
   \]
   where $D\eqdef\left\{x\in X: \int f(x,y)\rho(x,dy) \text{ exists}\right\}$. Then, $D\in \X$ and $g$ is $(\X|_D,\B(\overline{\R}))$-measurable.
\end{lemma}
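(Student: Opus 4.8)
The plan is to build up measurability of $x\mapsto \int f(x,y)\rho(x,dy)$ through the standard layers of the integral, and only at the end isolate the domain $D$ and subtract the positive and negative parts. First I would treat indicators. For a measurable rectangle $B_1\times B_2$ with $B_1\in\X$ and $B_2\in\Y$, one has $\int 1_{B_1\times B_2}(x,y)\rho(x,dy)=1_{B_1}(x)\rho(x,B_2)$, which is $\X$-measurable since $x\mapsto\rho(x,B_2)$ is $\X$-measurable by the definition of a probability kernel. Let $\mathcal{D}$ be the collection of $A\in\X\otimes\Y$ for which $x\mapsto\rho(x,A^x)=\int 1_A(x,y)\rho(x,dy)$ is $\X$-measurable, where $A^x\eqdef\{y\in Y:(x,y)\in A\}$ denotes the $x$-section. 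The rectangles form a $\pi$-system generating $\X\otimes\Y$ and lie in $\mathcal{D}$, and because each $\rho(x,\csdot)$ is a probability measure, $\mathcal{D}$ is a $\lambda$-system: it contains $X\times Y$, is closed under complements via $\rho(x,(A^c)^x)=1-\rho(x,A^x)$, and is closed under countable disjoint unions via countable additivity of $\rho(x,\csdot)$. The $\pi$-$\lambda$ theorem then gives $\mathcal{D}=\X\otimes\Y$.

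Next, by linearity the conclusion holds for nonnegative simple functions. For an arbitrary nonnegative $(\X\otimes\Y)$-measurable $f$, I would choose simple functions $f_m\uparrow f$ and apply the monotone convergence theorem pointwise in $x$, so that $x\mapsto\int f(x,y)\rho(x,dy)$ is an increasing pointwise limit of $\X$-measurable maps and hence is $\X$-measurable as a map into $[0,\infty]$.

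For the signed case, write $f=f^+-f^-$ and set $g^+(x)\eqdef\int f^+(x,y)\rho(x,dy)$ and $g^-(x)\eqdef\int f^-(x,y)\rho(x,dy)$, both $\X$-measurable into $[0,\infty]$ by the previous step. The integral $\int f(x,y)\rho(x,dy)$ exists in $\overline{\R}$ exactly when $g^+(x)$ and $g^-(x)$ are not both infinite, so that $D=X\setminus\big(\{g^+=\infty\}\cap\{g^-=\infty\}\big)\in\X$. On $D$ the difference $g=g^+-g^-$ is well defined in $\overline{\R}$, and the restriction to $D$ of the difference of two $[0,\infty]$-valued measurable functions is $(\X|_D,\B(\overline{\R}))$-measurable, which is the claim.

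The hard part will be the first step, where the $\pi$-$\lambda$ argument upgrades the kernel's measurability from rectangles to all of $\X\otimes\Y$, yielding section-measurability $x\mapsto\rho(x,A^x)$; this is where the probability-kernel structure is genuinely used. Once that is in hand, the passage through simple functions, the monotone limit, and the positive/negative decomposition together with the identification of $D$ are all routine.
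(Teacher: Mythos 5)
Your proposal is correct and follows essentially the same route as the paper: indicators, then nonnegative functions via monotone limits, then the decomposition $f=f^+-f^-$ with $D=\{g^+\wedge g^-<\infty\}$ identified at the end. The only cosmetic difference is that you run the set-level $\pi$-$\lambda$ theorem on $\{A\in\X\otimes\Y: x\mapsto\rho(x,A^x)\text{ is }\X\text{-measurable}\}$ and then pass through simple functions, whereas the paper invokes the functional monotone class theorem directly on bounded measurable functions—two standard packagings of the same monotone-class argument.
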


\begin{proof}
   Let $\H$ be the class of bounded $(\X\otimes \Y)$-measurable functions $h$ such that the map $x\mapsto \int h(x,y)\rho(x,dy)$ is measurable. First, $\H$ contains indicators of measurable rectangles because for $B\in \X$ and $C\in \Y$,
   \[
      \int 1_B(x)1_C(y)\rho(x,dy)=1_B(x)\rho(x,C)
   \]
   is $\X$-measurable. Clearly, $\H$ is a vector space, closed under bounded monotone increasing limits of nonnegative functions in $\H$. Hence, by the functional monotone class theorem \citep[see, e.g.,][Theorem~5.2.2]{Durrett:Probability}, $\H$ contains all bounded, $(\X\otimes \Y)$-measurable functions.

   Write $f=f^{+} - f^{-}$, where $f^{+}\eqdef f\vee 0$ and $f^{-}\eqdef -(f\wedge 0)$. By the previous result and the monotone convergence theorem, both
   \[
      g^{+}(x)\eqdef\int f^{+}(x,y)\rho(x,dy) \qtext{and}\quad g^{-}(x)\eqdef\int f^{-}(x,y)\rho(x,dy)
   \]
   are $(\X,\B(\overline{\R}))$-measurable. Hence, $D=\{x\in X: g^{+}(x)\wedge g^{-}(x)<\infty\}\in \X$. On this set, $g(x)=g^{+}(x)-g^{-}(x)$, which is $(\X|_D,\B(\overline{\R}))$-measurable.
\end{proof}

Given a topological space $Y$, we endow the set of all probability measures $\P(Y)$ on $(Y,\B(Y))$ with the topology of weak convergence, i.e., the coarsest topology that makes the map $\mu\mapsto \int f\, d\mu$ continuous for any bounded and continuous function $f$.

\begin{lemma}
   \label{lemm:mble_integral2}
   Let $(X,\X)$ be a measurable space and let $Y$ be a metrizable space. Suppose further that $f: X \times Y \to \R$ is $(\X\otimes \B(Y))$-measurable. Define the map $g: D \to \overline{\R}$ by
   \[
      g(x, r) = \int_Y f(x,y)r(dy),
   \]
   where $D\eqdef\left\{(x,r)\in X\times \P(Y): \int f(x,y)r(dy) \text{ exists}\right\}$. Then, $D\in \X\otimes \B(\P(Y))$ and $g$ is $(\X\otimes \B(\P(Y))|_D,\B(\overline{\R}))$-measurable.
\end{lemma}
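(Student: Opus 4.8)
The plan is to mirror the functional monotone class argument used in the proof of Lemma \ref{lemm:mble_integral1}, with the second coordinate now being a variable probability measure $r \in \P(Y)$ rather than the value $\rho(x,\cdot)$ of a fixed kernel. The only genuinely new ingredient is the measurability of the evaluation map $r \mapsto r(C)$ on $\P(Y)$; once that is in place, the rest of the argument is a direct transcription of Lemma \ref{lemm:mble_integral1}.

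First I would record the fact that for each $C \in \B(Y)$ the map $\P(Y) \ni r \mapsto r(C)$ is $\B(\P(Y))$-measurable. This is standard for metrizable $Y$: for open $U$ the Portmanteau theorem shows $r \mapsto r(U)$ is lower semicontinuous, hence Borel, and the class $\{C \in \B(Y) : r \mapsto r(C) \text{ is } \B(\P(Y))\text{-measurable}\}$ is a $\lambda$-system containing the open sets, so by Dynkin's theorem it equals $\B(Y)$ (alternatively, one may invoke the description of the Borel $\sigma$-algebra of the weak topology in \cite{Aliprantis/Border:06:InfiniteDimensionalAnalysis}). I expect this verification to be the main point of attention, since everything downstream reduces to it.

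Next, let $\H$ be the class of bounded $(\X\otimes \B(Y))$-measurable $f$ for which $(x,r) \mapsto \int_Y f(x,y)\,r(dy)$ is $(\X \otimes \B(\P(Y)))$-measurable. For a measurable rectangle $B \times C$ with $B \in \X$ and $C \in \B(Y)$, one has $\int_Y 1_B(x)1_C(y)\,r(dy) = 1_B(x)\,r(C)$, which is a product of an $\X$-measurable function of $x$ and the Borel function $r \mapsto r(C)$, hence $(\X \otimes \B(\P(Y)))$-measurable; taking $B\times C = X\times Y$ shows $1 \in \H$. Since such rectangles form a $\pi$-system generating $\X \otimes \B(Y)$, and $\H$ is a vector space closed under bounded monotone increasing limits of nonnegative functions (by the monotone convergence theorem), the functional monotone class theorem \citep[see, e.g.,][Theorem~5.2.2]{Durrett:Probability} yields that $\H$ contains every bounded $(\X\otimes \B(Y))$-measurable function.

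Finally, I would remove the boundedness restriction exactly as in Lemma \ref{lemm:mble_integral1}. Writing $f = f^{+} - f^{-}$, I apply the previous step to the truncations $f^{+}\wedge m$ and $f^{-}\wedge m$ and let $m \to \infty$; by monotone convergence the maps $g^{+}(x,r) \eqdef \int_Y f^{+}(x,y)\,r(dy)$ and $g^{-}(x,r) \eqdef \int_Y f^{-}(x,y)\,r(dy)$ are $(\X \otimes \B(\P(Y)),\B(\overline{\R}))$-measurable. The integral $\int_Y f(x,y)\,r(dy)$ exists precisely when it is not of the form $\infty - \infty$, i.e. when $g^{+}(x,r)\wedge g^{-}(x,r) < \infty$, so $D = \{g^{+}\wedge g^{-} < \infty\} \in \X \otimes \B(\P(Y))$, and on $D$ we have $g = g^{+} - g^{-}$, which is $(\X\otimes \B(\P(Y))|_D,\B(\overline{\R}))$-measurable, as required.
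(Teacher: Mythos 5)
Your proof is correct, but it takes a different (more self-contained) route than the paper. The paper's proof is a reduction: it defines $\rho(x,r,B) \eqdef r(B)$ as a probability kernel from $(X \times \P(Y), \X \otimes \B(\P(Y)))$ to $(Y,\B(Y))$, extends $f$ to $\widetilde{f} \eqdef f \circ \pi$ on $X \times \P(Y) \times Y$, observes $\int f(x,y)\,r(dy) = \int \widetilde{f}(x,r,y)\,\rho(x,r,dy)$, and then invokes Lemma \ref{lemm:mble_integral1} wholesale, citing \cite{Aliprantis/Border:06:InfiniteDimensionalAnalysis} (Lemma 15.16) for the Borel measurability of the evaluation maps $r \mapsto r(B)$. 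You instead inline the monotone class machinery: you re-run the functional monotone class argument directly on $X \times \P(Y)$ and prove the evaluation-map measurability yourself via lower semicontinuity of $r \mapsto r(U)$ for open $U$ plus a Dynkin-system argument. Both arguments hinge on exactly the same key fact (measurability of $r \mapsto r(C)$ for Borel $C$), and your truncation step for unbounded $f$ matches the paper's treatment in Lemma \ref{lemm:mble_integral1} verbatim. What the paper's approach buys is brevity and no duplication of the monotone class argument, since Lemma \ref{lemm:mble_integral1} already encapsulates it; what yours buys is independence from Lemma \ref{lemm:mble_integral1} and an explicit proof of the evaluation-map fact rather than a citation. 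One small caution on your Portmanteau step: $\P(Y)$ under the weak topology need not be metrizable when $Y$ is not separable, so the lower semicontinuity of $r \mapsto r(U)$ should be justified via nets or, more simply, by writing $r(U) = \sup_m \int \min\{1, m\,d(y,U^c)\}\,r(dy)$ as a supremum of weakly continuous functionals; with that phrasing your argument is airtight.
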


\begin{proof}
   Consider the projection $\pi:X\times \P(Y)\times Y\to X\times Y$ given by $\pi(x,r,y)=(x,y)$. We extend $f$ to $\widetilde{f}$ on $X \times \P(Y) \times Y$ by setting $\widetilde{f} \eqdef f\circ \pi.$ Note that $\widetilde{f}$ is $(\X \otimes \B(\P(Y))\otimes\B(Y))$-measurable as a composition of measurable functions. For each $x \in X$, $r \in \P(Y)$, and $B \in \B(Y)$, define $\rho(x,r,B) \eqdef r(B)$. Then, $\rho$ forms a $(\X\otimes \B(\P(Y)))$-measurable probability kernel, $\rho:(X \times \P(Y)) \times \B(Y) \to [0,1]$, because for any Borel $B$, the evaluation map $r \mapsto r(B)$ is Borel \citep[see, e.g.,][Lemma 15.16]{Aliprantis/Border:06:InfiniteDimensionalAnalysis}. Note that
   \[
      \int f(x,y)r(dy) = \int\widetilde{f}(x,r,y) \rho(x,r,dy),
   \]
   when the integrals exist. Hence, the result follows from Lemma \ref{lemm:mble_integral1}.
\end{proof}

Let $\xi$ follow a normal distribution with location $\mu \in \R$ and scale $\sigma > 0$, truncated to the interval $[a,b]$ with $-\infty < a < b < \infty$. For $x\in\R$, let $\tau(x;\mu)\eqdef (x-\mu)/\sigma$. Define $D(\mu) \eqdef \Phi(\tau(b;\mu)) - \Phi(\tau(a;\mu))$, where $\phi$ and $\Phi$ are the PDF and CDF of the standard normal distribution. The PDF and CDF of $\xi$ on $[a,b]$ are given by
\[
   f(x;\mu) = \frac{\phi(\tau(x;\mu))}{\sigma D(\mu)}
   \qtext{and}\quad 
   F(x;\mu) = \frac{\Phi(\tau(x;\mu)) - \Phi(\tau(a;\mu))}{D(\mu)},
\]
respectively. (For brevity, we have suppressed the dependence on $\sigma$.)
 
\begin{lemma}
   \label{lemm:trunc_normal_drv}
   For every $x \in [a,b]$ and $\mu\in \R$, $-(2\sigma)^{-1} \le \partial_{\mu} F(x;\mu) \le 0$.
\end{lemma}
 
\begin{proof}
   For $x \in \{a,b\}$, $F(x;\mu)$ equals $0$ or $1$ for every $\mu$, and the bounds hold trivially. Fix $x\in (a,b)$. A direct computation gives $\partial_\mu \log f(x;\mu) = (x - \E[\xi])/\sigma^2$, where $\E[\xi]$ is the expected value of $\xi$. Differentiating $F(x;\mu) = \int_a^x f(z;\mu)\,dz$ under the integral sign, we get
   \begin{align*}
      \partial_{\mu} F(x;\mu) &= \int_a^x f(z;\mu)\cdot \partial_\mu \log f(z;\mu)\,dz \\
      &=\int_a^x f(z;\mu)\cdot \frac{z-\E[\xi]}{\sigma^2} \,dz
      = \frac{1}{\sigma^2}\Cov(\xi,1\{\xi \le x\}).
   \end{align*}
   Since $\E[\xi \mid \xi \le x] \le \E[\xi \mid \xi > x]$,
   \[
      \Cov(\xi, 1\{\xi \le x\}) = F(x;\mu)(1 - F(x;\mu))(\E[\xi \mid \xi \le x] - \E[\xi \mid \xi > x]) \le 0,
   \]
   implying $\partial_\mu F(x;\mu) \le 0$. For the lower bound, the Cauchy--Schwarz inequality yields
   \[
      |\Cov(\xi, 1\{\xi \le x\})|
      \le \sqrt{\Var(\xi)\Var(1\{\xi \le x\})}
      \le \sigma/2
   \]
   because $\Var(1\{\xi \le x\}) = F(x;\mu)(1 - F(x;\mu)) \le 1/4$ and $\Var(\xi) \le \sigma^2$. Hence, $\partial_\mu F(x;\mu) \ge -\sigma/(2\sigma^2) = -1/(2\sigma)$.
\end{proof}

\bookmarksetup{startatroot}

\bibliographystyle{elsarticle-harv}
\bibliography{inconsistent_beliefs_BE}

\end{document}